\newcommand{\ignore}[1]{}
\newcommand{\defstyle}[1]{\emph{#1}}
\newcommand{\A}{\mathcal{A}}
\newcommand{\B}{\mathcal{B}}
\renewcommand{\O}{\mathcal{O}}
\newcommand{\N}{\mathbb{N}}
\newcommand{\Z}{\mathbb{Z}}
\newcommand{\ptime}{\textsc{PTime}\xspace}
\newcommand{\nctwo}{\textsc{NC}$^2$\xspace}
\newcommand{\pspace}{\textsc{PSpace}\xspace}
\newcommand{\nl}{\textsc{NL}\xspace}
\newcommand{\polylogspace}{\textsc{PolyLogSpace}\xspace}
\newcommand{\logspace}{\textsc{L}\xspace}
\newcommand{\nlogspace}{\textsc{NL}\xspace}
\newcommand{\np}{\textsc{NP}\xspace}
\newcommand{\conp}{\textsc{coNP}\xspace}
\newcommand{\expspace}{\textsc{ExpSpace}\xspace}
\newcommand{\twoexptime}{\textsc{2ExpTime}\xspace}
\newcommand{\twoexpspace}{\textsc{2ExpSpace}\xspace}
\newcommand{\ackermann}{\textsc{Ackermann}\xspace}
\newcommand{\set}[1]{\{#1\}}
\newcommand{\fin}{\textsc{fin}}
\newcommand{\trans}[1]{\xrightarrow{#1}}
\newcommand{\init}{\textup{init}}
\newcommand{\eps}{\varepsilon}
\renewcommand{\epsilon}{\varepsilon}
\newcommand{\ie}{\emph{i.e.}}
\newcommand{\dcup}{\mathop{\dot\cup}}
\newcommand{\partition}{\textsc{Perfect Partition}\xspace}
\newcommand{\lang}{\mathcal{L}}
\title{Universality Problem for Unambiguous VASS} 
\titlerunning{Universality Problem for Unambiguous VASS}
\author{Wojciech Czerwi\'nski\inst{1}
\and Diego Figueira\inst{2}
\and Piotr Hofman\inst{1}}
\institute{University of Warsaw \and Univ.\ Bordeaux, CNRS,  Bordeaux INP, LaBRI}
\begin{document}
\maketitle
\begin{abstract}
We study languages of unambiguous VASS, that is, Vector Addition Systems with States, whose transitions read letters from a finite alphabet, and whose acceptance condition is defined by a set of final states (\ie, the coverability language). We show that the problem of universality for unambiguous VASS is \expspace-complete, in sheer contrast to \ackermann-completeness for arbitrary VASS, even in dimension $1$. When the dimension $d \in \N$ is fixed, the universality problem is \pspace-complete if $d \geq 2$, and $\conp$-hard for $1$-dimensional VASSes (also known as {\sl One Counter Nets}).
\end{abstract}



\section{Introduction}
Determinism is a central notion of computational models, it ensures that there is one way to proceed for every input. It often enables constructions which would not be possible without it and allows for efficient algorithms. While the relation between deterministic vs non-deterministic models is extensively studied, there exists also a less understood middle ground of {\sl unambiguous} systems. In the case of models accepting word languages, a model is said to be \emph{unambiguous} if for every word in its language, there is exactly one accepting run, which is a much weaker restriction than determinism.
Unambiguity, although featuring non-determinism, often causes some problems to be computationally easier. As a prominent example, the universality problem for finite automata (\ie, whether all words over the alphabet are accepted by the automaton), which is \pspace-complete in general, is known to be in \ptime in the unambiguous case~\cite{DBLP:journals/siamcomp/StearnsH85}
and even in \nctwo~\cite{DBLP:journals/ipl/Tzeng96}.
While the study of unambiguous models of computation has lately attracted some attention, in some settings it remains, by and large, an unexplored area.

In particular, there has been considerable volume of research on unambiguous {\sl finite automata} (see~\cite{DBLP:conf/dcfs/Colcombet15} for a nice overview).
One way to design a polynomial time algorithm for the universality problem on finite automata is to show that
the shortest word which is not in the language, if any, is of at most linear length. Then, by counting the number of 
linear length runs one may answer the problem. The existence of a linear counterexample for universality and its \ptime algorithm, led to the conjecture, formulated by Colcombet~\cite{DBLP:conf/dcfs/Colcombet15}, that for every unambiguous finite automaton (UFA)
there exists another UFA of polynomial size accepting the complement of its language.
This conjecture was later shown false by Raskin~\cite{DBLP:conf/icalp/Raskin18}. As it turns out, there is a family of UFA
such that for accepting the complement of UFA with $n$ states even nondeterministic finite automaton (NFA) needs
a super-polynomial number of states ---at least $\Theta(n^{\log \log \log n})$.
The universality problem for UFA is actually known to be not only in \ptime, but even in \nctwo~\cite{DBLP:journals/ipl/Tzeng96},
the class of problems solvable by uniform families of circuits with $\O(\log^2 n)$ depth and binary fan-in.
The work~\cite{DBLP:journals/ipl/Tzeng96} in fact solves the more general problem of {\sl path equivalence} for two NFA: is the number of accepting runs on $w$ the same for both automata, for every word $w$? However, to the best of our knowledge the best known lower bound for the problem is \nl-hardness, so the exact complexity of universality problem for UFA is still open
even in the simplest possible setting of finite automata.

There was also research about the universality problem and related ones for unambiguous register automata.
In~\cite{DBLP:conf/stacs/MottetQ19} authors have shown that the containment problem for unambiguous register automata is in \twoexpspace
and even in \expspace if the number of registers is fixed, which implies similar upper bounds for the universality problem.
Without the unambiguity assumption, even the universality problem (and even with just one register) can be shown undecidable~\cite{DBLP:journals/tocl/NevenSV04} or Ackermann-hard~\cite{DBLP:journals/tocl/DemriL09} depending on the concrete model of register automata.

It is not by accident that existing research focuses on universality, equivalence and containment of languages of unambiguous systems, and that there are efficient algorithms for these problems under the assumption of unambiguity.
Unambiguity speaks about the language of a system, so it is natural to hope that problems related to the language
of the systems may become more tractable. But for the most natural problem concerning the language, i.e., for the emptiness problem
one cannot hope for improvement. This is because for most of the systems one can relabel transitions giving each one a unique label.
Then the system becomes deterministic and in consequence unambiguous. The language changes, but it is empty iff the original language was empty,
which intuitively explains why the emptiness problem shouldn't be any easier for unambiguous systems compared to general non-deterministic ones.
On the other hand, it is more reasonable to expect that the universality problem might be easier since both the universality problem and the unambiguity property are universal properties of the form ``\textsl{For all words, [...]}''.

\paragraph*{Our contribution} The foremost goal of this paper is to push the understanding of unambiguity further. We focus on the universality problem, which is arguably the most natural first step, that may open the way for further studies on the equivalence, co-finiteness, containment and other problems for languages.
The universality problem was studied for finite automata and register automata under the unambiguity assumption.
In our opinion, the most interesting yet unsolved cases in which one can expect some progress assuming unambiguity are One Counter Nets
(called also 1-dimensional VASS here) and its generalization Vector Addition Systems with States (VASS).

The universality checking for VASS with state acceptance is known to be decidable by the use of well quasi-order techniques~\cite{DBLP:journals/jcss/JanarEM99}
(the paper shows decidability of trace universality, but language universality can be reduced to that problem).
However the problem is also known to be \ackermann-complete even for 1-dimensional VASS~\cite{DBLP:conf/rp/HofmanT14}, so hardly tractable.
For deterministic VASS it is quite easy to show that the universality problem can be decided in \ptime.
Therefore, it is natural to hope for improvement
under the unambiguity restriction.

Our main contribution is \expspace membership of the universality problem for unambiguous VASS.
We believe that it is the most interesting result and it was as well the most challenging problem and technically involved solution.
We actually have shown that this problem is \expspace-complete.
For the completeness of the picture we have also analyzed the complexity of the problem for $d$-dimensional VASS for fixed $d \in \N$.
We have shown that the problem is \pspace-complete for every $d \geq 2$.
For $d = 1$ we have shown \conp-hardness,
although we do not have the matching upper bound, we conjecture that it is \conp-complete.
We additionally consider the variant of the problem in which the numbers in the input are encoded in unary. Finally, we study also the problem of unambiguity checking  (\ie, given a VASS, is it unambiguous?).
All our results are listed in Section~\ref{sec:results}.


\section{Preliminaries}
We use the letter $\Sigma$ to denote a finite alphabet, $\Z$ to denote the set of all integers, and $\N$ the set of non-negative integers.
We use $\epsilon$ to denote the empty string, and $\Sigma_\epsilon$ to denote $\Sigma \cup \set{\epsilon}$.
\newcommand{\subsetfin}{\subseteq_{\mathit{fin}}}%
\newcommand{\wpfin}{\wp_{\mathit{fin}}}%
We use $A\subsetfin B$ to denote that $A$ is a finite subset of $B$, and $\wpfin(A)$ to denote the set of all finite subsets of $A$. We use $\bar u, \bar v, \bar w, \dotsc$ to denote vectors of numbers, and we use $\bar 0$ to denote the all-$0$ vector and $\bar 1$ to denote the all-$1$ vector.
We use $[i,j]$ for $i, j \in \N$, $i \leq j$ to denote the set $\{i, i+1, \ldots, j-1, j\}$.
For a vector $\bar u \in \Z^d$ and $i \in [1,d]$ we denote by $\bar u[i]$ the $i$-th coordinate of $\bar u$.
For a word $w \in \Sigma^*$ and $i \in \N$, $i > 0$ we denote by $w[i]$ the $i$-th letter of $w$.
For $\bar u, \bar v \in \Z^d$ we write $\bar u \preceq \bar v$ if for all $i \in [1,d]$ we have $\bar u[i] \leq \bar v[i]$.
We define the minimum of $\bar u$ and $\bar v$ as $\min(\bar u, \bar v)[i] = \min(\bar u[i], \bar v[i])$ for any $i \in [1,d]$.

\newcommand{\run}{\xrightarrow}%
\newcommand{\prun}{\xdashrightarrow}%
We consider a Vector Addition Systems with States (VASS) of dimension $d \in \N$ as a tuple $\A = (\Sigma,d,Q,q_0,\delta,F)$  where $\Sigma$ is a finite alphabet, $Q$ is a finite state space,  $q_0 \in Q$ is the  initial state, $F \subseteq Q$ is the set of final states, and $\delta \subsetfin Q \times \Sigma_\epsilon \times \Z^d \times Q$ is the set of transitions.
We often write transition $(p, a, v, q)$ as $p \trans{a; v} q$. We will henceforth write \defstyle{$d$-VASS} to denote a VASS of fixed dimension $d$.
A \defstyle{configuration} of $\A$ is a pair of a state $q \in Q$ and a vector $\bar u \in \N^d$, that we usually note $q(\bar u)$. If $c$ is a configuration, we write $c[i]$ to denote the $i$-th coordinate of the vector it contains. A \defstyle{run} of $\A$ from a configuration $q(\bar u)$ to a configuration $q'(\bar v)$ reading the word $w \in \Sigma^*$ is a sequence of transitions
$(r_1,\alpha_1,\bar v_1,r'_1) \dotsb (r_n,\alpha_n,\bar v_n,r'_n) \in \delta^*$ such that: (i) $r_1 = q$ and $r'_n = q'$, (ii) $r'_i = r_{i+1}$ for every $1 \leq i < n$; (iii) $w = \alpha_1 \dotsb \alpha_n$; (iv) $\bar u + \sum_{i \leq j} \bar v_i \in \N^k$ for every $1\leq j \leq n$; and (v) $\bar v = \bar u + \sum_{i \leq n} \bar v_i$.
If we further have $q' \in F$, we say that such run is \defstyle{accepting}.
We henceforth say that a configuration $c$ is \defstyle{reachable} from a configuration $c'$ if there is a run from $c'$ to $c$.
The \defstyle{effect} of a transition $(r, \alpha, \bar v, r')$ is the vector $\bar v \in \Z^d$, the \defstyle{effect of a run} is the sum of effects of the transitions therein.
%
The \defstyle{norm} of a VASS $\A$ is the maximal absolute value of a number occurring in its transition, and we denote it by $|\A|$.
The \defstyle{language} of a configuration $c$ in $\A$, denoted by $\lang(\A, c)$, is the set of all $w \in \Sigma^*$ with an accepting run from $c$.
We call $q_0(\bar 0)$ the \defstyle{initial configuration} where $q_0$ is the initial state.
If $c$ is the initial configuration then we just say language of $\A$ and write $\lang(\A)$ instead of $\lang(\A, c)$.
A VASS $\A$ is \defstyle{unambiguous} if for every $w \in \Sigma^*$ there is no more than one accepting run starting from the initial configuration and reading $w$. The \defstyle{unambiguity checking problem} for VASS is the problem of, given a VASS $\A$, decide whether it is unambiguous.
An automaton over $\Sigma$ (finite automaton or VASS) is \defstyle{universal} if it accepts the language $\Sigma^*$. The \defstyle{universality problem} for VASS is the problem of, given a VASS $\A$, decide whether it is universal.
We will henceforth assume that the numbers contained in the transitions of VASSes are always encoded in binary if not explicitly indicated otherwise. 

Observe that we work with VASS with $\eps$-transitions, the reason for doing so is that it is a natural model, the upper bounds still hold in this more general setup, and we can also derive tight lower bounds by making use of $\eps$-transitions. We do not know whether adding $\eps$-transitions increases the class of recognized languages, not even in the non-deterministic case. It seems to us a rather difficult question.

Let us recall now the main result of the Rackoff construction~\cite{DBLP:journals/tcs/Rackoff78}.
Let us denote $A_{M,d,n} = (2n^2(M+1)^2)^{(4d)^{d-1}}$. We present here an adaptation of the Rackoff argument
with an explicit bound on the length of an accepting run.

\begin{proposition}[Adaptation of the Rackoff construction]\label{prop:rackoff}
If a language of a $d$-VASS with norm $M$ and $n$ states is nonempty
then there exists an accepting run of length at most $A_{M,d,n}$.
\end{proposition}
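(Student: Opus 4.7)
The plan is to adapt Rackoff's original covering argument with explicit constant tracking. Introduce the auxiliary notion of an \emph{$i$-covering run} for $i \in \{0,1,\ldots,d\}$: a sequence of transitions from the initial configuration leading to a final state, in which only the first $i$ coordinates are required to stay in $\N$, while coordinates $i+1,\ldots,d$ may freely become negative. An accepting run of the VASS is precisely a $d$-covering run, and $0$-covering runs are paths in the underlying transition graph with no arithmetic constraint.

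I would prove by induction on $i$ that whenever an $i$-covering accepting run exists from a given configuration, there is one of length at most $f(i)$, for a suitable $f$. The base case $i=0$ is immediate: a path in the state graph can be shortened to a simple one of length at most $n$, so $f(0)=n$. For the inductive step, set the threshold $B_i = M\cdot f(i-1)+1$, which strictly exceeds the maximum absolute effect on any coordinate of an $(i-1)$-covering run of length at most $f(i-1)$. Consider a shortest $i$-covering accepting run $\rho$. Two cases arise: either every configuration on $\rho$ has its first $i$ coordinates below $B_i$, or there is a first configuration $c$ along $\rho$ at which some coordinate $j\in[1,i]$ reaches $B_i$. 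In the first case, the projection on (state, coordinates $1..i$) takes at most $n\cdot B_i^i$ values, and if any projected configuration is visited twice then the intervening cycle can be spliced out, since the remaining suffix only needs coordinates $1..i$ to stay nonnegative and these are unchanged by the splicing; hence $|\rho|\leq n\cdot B_i^i$. In the second case, the prefix of $\rho$ up to $c$ is handled by the first case applied to a minor modification of the VASS (replacing acceptance by ``first reach of value $B_i$ on coordinate $j$''), while from $c$ onwards coordinate $j$ is too large ever to become negative during an $(i-1)$-covering run of length at most $f(i-1)$, so the suffix of $\rho$ may be treated as an $(i-1)$-covering accepting run and shortened by the induction hypothesis. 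This yields the recurrence $f(i)\leq n\cdot B_i^i + f(i-1)$.

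The main obstacle is not the structural argument, which is standard, but the careful bookkeeping needed to check that unrolling the recurrence $f(i)\leq n\cdot(Mf(i-1)+1)^i+f(i-1)$ from $f(0)=n$ up to $i=d$ produces precisely the bound $A_{M,d,n}=(2n^2(M+1)^2)^{(4d)^{d-1}}$. This amounts to a direct inductive estimate showing, e.g., $f(i)\leq (2n^2(M+1)^2)^{(4d)^{i-1}}$ for every $i\geq 1$, noting that raising the previous bound to a power $i\leq d$ multiplies its exponent by at most $d$, and that the additive $n$ and multiplicative constants are absorbed by the gap between $(4d)^{i-1}$ and $(4d)^{i}$. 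The presence of $\epsilon$-transitions is harmless since the bound concerns only the number of transitions in the run, and state acceptance fits directly into the covering framework without altering $d$ or $M$.
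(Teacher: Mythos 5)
Your proof is correct and follows essentially the same Rackoff-style argument as the paper: the same two-case split (all constrained coordinates below a threshold versus a first configuration exceeding it), the same recurrence, and the same unrolling to the bound $A_{M,d,n}$. The only cosmetic difference is that you induct on the number of coordinates required to stay nonnegative within a fixed $d$-VASS (with a trivial $i=0$ base case), whereas the paper inducts on the dimension itself by projecting away the large coordinate and treats $d=1$ separately via a pumping argument; the two formulations are interchangeable.
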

\begin{proof}
	Let $C = 2n^2(M+1)^2$.
	We proceed by induction on $d$. For $d = 1$ assume there is some accepting run with no configuration repeating.
	Then in its prefix of length $nM$ there is definitely first a configuration $q(x)$ and later a configuration $q(\bar y)$ for some state
	$q$ and counter values $x < y$. Then we can change this accepting run into an accepting run of length at most $nM + (nM)^2 + n-1$.
	We first pump the infix from $q(x)$ do $q(y)$ exactly $nM$ times obtaining then a configuration $q(z)$ with $z = y + nM(y-x) \geq nM$.
	As some accepting state is reachable from $q$ then it is also reachable by a run of length smaller than $n$. This run
	(and any of its prefixes) can, at worst, have a negative effect of value $(n-1)M$, and thus it can be triggered from $q(z)$, since $z \geq nM$.
	In this way, we get an accepting run of length at most $nM + (nM)^2 + n-1 \leq C$, proving the  base case.

	For the inductive step, assume that there is an accepting run $s(\bar 0) \trans{\rho} f(\bar v)$ in a $(d+1)$-VASS with norm $M$ and $n$ states.
	Let $K_d = C^{(4d)^{d-1}}$. We distinguish two cases:
	\begin{enumerate}[(i)]
	  \item the norm of every configuration on $\rho$ is bounded by $C \cdot K_d$;
	  \item the norm of some configuration on $\rho$ exceeds $C \cdot K_d$.
	\end{enumerate}
	Without loss of generality we can assume that no configuration on $\rho$ appears more than once, otherwise we can ``unpump'' $\rho$ to obtain a shorter one.
	Observe that in the first case (i), the length of $\rho$ is bounded by $D = (C \cdot K_d)^{d+1}$ (we will bound $D$ later on).

	In the second case (ii), the run $\rho$ might be long, but we will show that there is another short accepting run $\rho'$.
	Let $p(\bar u)$ be the first configuration on $\rho$ with norm exceeding $C \cdot K_d$.
	Let $s(\bar 0) \trans{\rho_1} p(\bar u) \trans{\rho_2} f(\bar v)$. Clearly, the length of $\rho_1$ is bounded by $D$ by a similar reasoning as in the case (i).
	We will replace $\rho_2$ with a ``short'' run $\pi$, so that $c \trans{\pi} f(\bar v')$. First note that  some coordinate of $p(\bar u)$ must have value greater or equal to $C \cdot K_d$; without loss of generality, assume it is the last one, that is, the $(d+1)$-st coordinate.
	Let us now ignore the last coordinate in the VASS. By inductive hypothesis, there is a sequence of transitions $\pi$
	of length at most $K_d$ such that  $p(\bar u_d) \trans{\pi_d} f(\bar v'_d)$, where $\pi_d$ is the result of ignoring the last coordinate of $\pi$, and $\bar u_d, \bar v_d \in \N^d$ are the results of ignoring the last coordinate of $\bar u,\bar v$.
	%
	%
	Consider now the sequence of transitions $\pi$ starting in $p(\bar u)$. Its length is bounded by $K_d$, so its effect on the $(d+1)$-st coordinate
	is not smaller than $-M \cdot K_d$. 
	Since $\bar u[d+1] \geq C \cdot K_d \geq M \cdot K_d$, then  $\pi$ is indeed a valid run from $p(\bar u)$ to $f(\bar v')$ for some $\bar v' \in \N^{d+1}$.
	Therefore, the run $\rho_1 \cdot \pi$ is accepting from $s(\bar 0)$ as $s(\bar 0) \trans{\rho_1} p(\bar u) \trans{\pi} f(\bar v')$.
	The length of $\rho_1 \cdot \pi$ is at most $D + K_d$.

	In order to finish the argument in case (ii) we need to show that $D + K_d \leq K_{d+1}$, through the following sequence of (very rough) estimations
	\begin{align*}
	D + K_d & \leq 2D \leq C \cdot D = C \cdot (C \cdot K_d)^{d+1} = C \cdot \big( (C \cdot C^{(4d)^{d-1}})^{d+1} \\
	& = C^{((4d)^{d-1}+1)(d+1)+1} \leq C^{4(4d)^{d-1} \cdot (d+1)} \leq C^{(4(d+1))^d} = K_{d+1}.
	\end{align*}
	Observe that in case (i), the bound $D \leq K_{d+1}$ is trivial.
\end{proof}

The language emptiness problem for VASS (\ie, given a VASS, does it accept at least one word?) is, basically, equivalent to the coverability problem, which is known to be \expspace-complete as shown by the lower bound of Lipton~\cite{lipton1976} and the upper-bound of Rackoff~\cite{DBLP:journals/tcs/Rackoff78}. The coverability problem is the problem of, given a VASS $\A$ and two configurations $c_1,c_2$,  whether there is a run from $c_1$ to some configuration $c'_2$ such that $c'_2\succeq c_2$. In our setting, this result can be restated as the language emptiness problem for VASS being \expspace-complete, even when all transitions are $\epsilon$-transitions, and hence the language is either $\emptyset$ or $\set{\epsilon}$. 
What is more, the construction of Lipton is unambiguous: if there is an accepting run, there is exactly one. Indeed, the only situation in which Lipton's construction is ambiguous along a run is when it guesses whether the value of some counter is zero or non-zero. However, the run of a wrong guess is never an accepting one, as the guess is always followed by a verification.
This is formalized in the next lemma. Let us denote by $\epsilon$-VASS, a VASS whose every transition reads $\epsilon$ (and thus the alphabet is not important here).

\begin{lemma}[consequence of \cite{lipton1976,DBLP:journals/tcs/Rackoff78}]\label{lem:epsVASS-emptiness}
The problem of whether an unambiguous $\epsilon$-VASS has an empty language is \expspace-complete. 
\end{lemma}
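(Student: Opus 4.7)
The plan for this lemma is two-sided, and neither side needs substantially new ideas beyond what is already hinted at in the paragraph preceding the statement. For the \expspace upper bound, my strategy is simply to invoke \cref{prop:rackoff}: the Rackoff-style length bound applies to \emph{every} VASS, irrespective of unambiguity, and yields an \expspace algorithm for emptiness. The unambiguity assumption is merely a promise on the input, so the very same algorithm works on unambiguous $\epsilon$-VASS without modification.

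For the \expspace lower bound, the plan is to reduce from the coverability problem for general VASS---which is \expspace-hard by Lipton's classical construction~\cite{lipton1976}---and argue that Lipton's reduction actually produces an unambiguous $\epsilon$-VASS. First I would recall that Lipton builds a polynomial-size VASS simulating a counter machine whose counters are bounded by $2^{2^n}$; such large counters are represented via nested gadgets whose zero-tests are implemented by first \emph{guessing} whether the encoded value is zero or positive, and then \emph{verifying} the guess by recursing on the next-level gadgets. A wrong guess inevitably leaves some auxiliary counter unable to be cleaned up, so no run that makes a wrong guess can reach the target configuration. Hence, once we equip the construction with a final state that can be entered only when the target configuration is covered, the set of accepting runs on any given word is a singleton or empty, i.e.\ the VASS is unambiguous.

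To package this as an instance of our lemma, I would relabel every transition as an $\epsilon$-transition (Lipton's construction does not make use of any visible alphabet), so that the language of the resulting $\epsilon$-VASS is either $\emptyset$ or $\set{\eps}$, and the latter happens exactly when the original coverability instance is positive. The main obstacle---and the only point that requires care---is the unambiguity claim: formally verifying that every nondeterministic branch in each of Lipton's guess-and-verify gadgets is killed except for the correct one requires inspecting the construction layer by layer. This is, however, a folklore observation and is exactly the informal argument already spelled out immediately before the lemma, so I expect a short reference-based paragraph to suffice rather than a full re-derivation of Lipton's gadgets.
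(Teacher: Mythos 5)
Your proposal is correct and follows essentially the same route as the paper, which itself gives no formal proof but justifies the lemma exactly as you do: Rackoff's bound for the \expspace upper bound (unambiguity being only a promise on the input), and the observation that Lipton's guess-and-verify gadgets kill every wrong branch before acceptance, so the hard instances are already unambiguous $\epsilon$-VASSes. Nothing is missing relative to the paper's own level of detail.
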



\section{Results}\label{sec:results}
We summarize all our results in the next two theorems. Detailed proofs will come in the sections that follow.

\begin{theorem}\label{thm:universality}
The universality problem for
\begin{enumerate}[(i)]
  \item \label{it:vass-expspace} VASS is \expspace-complete, both with binary and unary encodings;
  \item \label{it:d-vass-unary-nctwo} $d$-VASS with unary encoding is in \nctwo and \nlogspace-hard, for every $d \geq 1$;
  \item \label{it:d-vass-binary-pspace} $d$-VASS with binary encoding is \pspace-complete, for every $d \geq 2$;
  \item \label{it:ocn-binary-conp} $1$-VASS (One Counter Net) with binary encoding is \conp-hard.
\end{enumerate}
\end{theorem}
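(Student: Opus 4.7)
For item~(\ref{it:vass-expspace}), the \expspace upper bound is the technical centerpiece of the paper and is developed in the sections that follow, so here I focus on hardness. The plan is to reduce from emptiness of an unambiguous $\epsilon$-VASS, which is \expspace-hard by Lemma~\ref{lem:epsVASS-emptiness}. Given such an $\A$, I would build an unambiguous VASS $\A'$ over the singleton alphabet $\{a\}$ whose fresh initial state has two $\epsilon$-branches: one leading to a small gadget accepting exactly $\{a\}^+$ (an $\epsilon$-transition to a non-final state followed by an $a$-self-loop on a final state), the other simulating $\A$ and preserving its final states. Then $\lang(\A')=\{a\}^+\cup\lang(\A)$, and since $\lang(\A)\subseteq\{\epsilon\}$, $\A'$ is universal iff $\epsilon\in\lang(\A)$, i.e., iff $\lang(\A)\ne\emptyset$. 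Unambiguity is preserved because the two branches accept disjoint languages, and the same construction applied to the unary \expspace-hard emptiness instances from Lemma~\ref{lem:epsVASS-emptiness} yields the unary lower bound.

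For item~(\ref{it:d-vass-unary-nctwo}), with $d$ fixed and numbers unary, the bound $A_{M,d,n}$ of Proposition~\ref{prop:rackoff} is polynomial in $|\A|$. I would first lift Proposition~\ref{prop:rackoff} into a polynomial bound on the length of the shortest \emph{rejected} word; a natural way is to take, for each candidate length $\ell$, the synchronized product of $\A$ with a complete deterministic NFA over $\Sigma$ tracking progress up to $\Sigma^{\le\ell}$, so that non-universality at length $\ell$ reduces to nonemptiness of a polynomially-sized product VASS, to which Proposition~\ref{prop:rackoff} applies. Once polynomial counterexamples are established, I would unfold $\A$ into an unambiguous finite automaton over $\Sigma$ whose states are configurations with all counter entries clipped to the polynomial bound (unfolding preserves unambiguity), and plug the resulting UFA into Tzeng's $\nctwo$ algorithm~\cite{DBLP:journals/ipl/Tzeng96}. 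The \nlogspace-hardness is inherited from UFA universality already in the degenerate case $d=0$.

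For item~(\ref{it:d-vass-binary-pspace}), the same scheme yields the \pspace upper bound: the Rackoff bound is now singly exponential in $|\A|$, so a nondeterministic polynomial-space machine can guess a rejected word letter by letter while maintaining the at-most-one partial accepting prefix-run in binary. For the $d\ge 2$ lower bound I would reuse the reduction of item~(\ref{it:vass-expspace}) but start from \pspace-hard emptiness of a binary-encoded $2$-VASS, which follows from the classical simulation of polynomial-space Turing machines by bounded $2$-counter machines, giving \pspace-hardness for every fixed $d\ge 2$.

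For item~(\ref{it:ocn-binary-conp}), I would reduce from \ssum: given an instance $a_1,\dots,a_n,t$, build a $1$-VASS over an alphabet encoding subsets of $[1,n]$ that accepts every syntactically correct word \emph{except} those encoding a subset summing to exactly $t$. The VASS nondeterministically picks one of three disjoint deterministic branches handling ``sum too small'', ``sum too large'' and ``syntactically malformed'', each implemented as a single-counter gadget; disjointness of the three branches preserves unambiguity, and universality then holds iff no subset sums to $t$. Beyond the deferred \expspace upper bound of item~(\ref{it:vass-expspace}), the main obstacle across the four items is arranging each reduction so that unambiguity is preserved, and specifically for item~(\ref{it:d-vass-unary-nctwo}) lifting Proposition~\ref{prop:rackoff} from a length bound on shortest accepting runs to one on shortest rejected words.
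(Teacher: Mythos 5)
Your reductions for items (i) and (iv) are essentially the paper's own (the paper routes (iv) through \partition rather than \ssum, but your ``too small''/``too large'' branches are exactly its two mirrored branches $r_i$/$s_i$), and those parts are fine. The genuine gaps are in the upper bounds of items (ii) and (iii). First, your method for bounding the shortest rejected word does not work: the synchronized product of $\A$ with a DFA for $\Sigma^{\le\ell}$ recognizes $\lang(\A)\cap\Sigma^{\le\ell}$, so its nonemptiness certifies that some short word \emph{is} accepted, not that some short word is rejected. Witnessing rejection is a universal statement over all runs, and coverability languages of VASS are not closed under complement, so Proposition~\ref{prop:rackoff} cannot be invoked this way. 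Second, and more importantly, the claim that clipping counters at a polynomial bound yields an equivalent UFA is precisely the paper's main technical content, not a routine unfolding. Clipping only gives $\lang(\A_V)\subseteq\lang(V)$; the converse inclusion fails for general VASS, because a real counter value above the clip threshold can absorb a large decrement that kills the clipped run. The paper proves the converse only for \emph{universal, unambiguous} VASS, via Lemma~\ref{lem:profile-reachable-ext} (two reachable configurations with equal $B_{|V|,d,n}$-profile have the same accepting runs, shown by a synchronized-product-plus-Rackoff argument that manufactures two accepting runs of one word) and Corollary~\ref{cor:profile-reachable} (no coordinate ever drops by $C_{|V|,d,n}$ along an accepting run). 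Without an argument of this kind your construction has no correctness proof, and this is the missing centerpiece.

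For item (iii) the same issue infects the upper bound, and your nondeterministic polynomial-space sketch is additionally wrong on its own terms: unambiguity bounds the number of \emph{accepting} runs of a complete word, not the number of partial runs of a prefix, so there is no ``at-most-one partial accepting prefix-run'' to maintain --- after reading a prefix the set of reachable configurations (even of reachable profiles) can be exponential, which is why the paper instead composes a \pspace computation of the profile automaton $\A_V$ with the poly-logarithmic-space UFA universality test. Your \pspace lower bound also has a gap: reusing the item-(i) reduction requires an \emph{unambiguous} $\epsilon$-2-VASS whose emptiness is \pspace-hard, and the classical bounded-counter simulations are ambiguous (relabelling transitions to force determinism destroys the property $\lang(\A)\subseteq\set{\epsilon}$ that your reduction relies on). The paper's Proposition~\ref{prop:pspace-hard-univ} instead builds a bespoke unambiguous 2-VASS from bounded one-counter reachability, using transition-named letters and carefully balanced ``escape'' transitions to $q_f$ so that exactly one transition is enabled at every step.
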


\begin{theorem}\label{thm:testing-unambiguity}
The unambiguity checking problem for
\begin{enumerate}[(i)]
  \item \label{it:unambcheck:vass-expspace} VASS is \expspace-complete, both with binary and unary encodings;
  \item \label{it:unambcheck:d-vass-logspace} $d$-VASS with unary encoding is \nlogspace-complete, for every $d \geq 1$;
  \item \label{it:unambcheck:d-vass-bin-pspace} $d$-VASS with binary encoding is \pspace-complete, for every  $d \geq 2$;
  \item \label{it:unambcheck:ocn-bin-conp} $1$-VASS with binary encoding is \conp-hard.
\end{enumerate}
\end{theorem}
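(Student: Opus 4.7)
The plan is to reduce unambiguity checking to language emptiness on a product VASS, and to obtain matching lower bounds by reductions from well-chosen emptiness/reachability problems. Given $\A = (\Sigma, d, Q, q_0, \delta, F)$, I would construct $\A^{(2)}$ of dimension $2d$ over the state space $Q \times Q \times \{\text{eq}, \text{diff}\}$, with initial state $(q_0, q_0, \text{eq})$ and final states $F \times F \times \{\text{diff}\}$. For every pair of transitions $t_1 = (p_1, \alpha, \bar v_1, p_1'),\ t_2 = (p_2, \alpha, \bar v_2, p_2') \in \delta$ reading the same label $\alpha \in \Sigma_\epsilon$, the product has a synchronous transition $(p_1, p_2, b) \trans{\alpha; (\bar v_1, \bar v_2)} (p_1', p_2', b')$ with $b' = \text{diff}$ iff $b = \text{diff}$ or $t_1 \neq t_2$. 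Additionally, for every $\epsilon$-transition $t_1$ of $\A$ there are asymmetric $\epsilon$-transitions advancing only one of the sides and forcing the flag to $\text{diff}$. A routine interleaving argument, pairing up the $\epsilon$-segments between consecutive letters of two accepting runs of $\A$, shows that $\A$ is ambiguous iff $\lang(\A^{(2)}) \neq \emptyset$. Crucially, $\A^{(2)}$ has size polynomial in $|\A|$, the same norm, and dimension $2d$.

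The upper bounds then reduce to coverability in $\A^{(2)}$. Case (i) is immediate from the \expspace bound of Rackoff. For (iii), Proposition~\ref{prop:rackoff} applied to $\A^{(2)}$ yields a witness run whose length is polynomial in $M$ and $|Q|$ for fixed $d$; since $M$ is at most exponential in the input under binary encoding, a nondeterministic algorithm can guess such a run step by step in \pspace, storing only the current configuration and a step counter in polynomially many bits. For (ii), the same bound under unary encoding gives polynomially bounded configurations throughout the run, so the guess can be performed in \nlogspace; the bound for unambiguity then follows from $\nlogspace = \text{co}\nlogspace$. For the lower bounds in (i), (ii), and (iii), a uniform scheme suffices: from a fresh initial state, $\epsilon$-branch into a deterministic path $\A_1$ reading a single letter $a$ and accepting (contributing exactly one accepting run on $a$), and a component $\A_2$ that simulates an instance of an emptiness/coverability problem via $\epsilon$-transitions and, upon success, reads $a$ and accepts. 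Then $\A$ is ambiguous iff the underlying instance has a witness. Plugging in the unambiguous $\epsilon$-VASS of Lemma~\ref{lem:epsVASS-emptiness} yields (i); a $d$-VASS with \pspace-hard coverability yields (iii); and a graph-reachability gadget in a $1$-VASS yields (ii).

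The main obstacle is case (iv), the \conp-hardness in dimension one with binary encoding. The uniform scheme above breaks here: coverability for $1$-VASS is in \ptime, and, more fundamentally, a $1$-VASS cannot test its counter for zero or for an exact value, which obstructs a direct simulation of an NP-hard numerical problem through a coverability-style target. I would instead encode \ssum (or a close variant such as the equal-sum problem) directly in the $\epsilon$-run structure of a $1$-VASS, arranging the construction so that each subset of the inputs corresponds to at most one accepting run on a fixed letter-word, and two distinct accepting runs coexist on some word iff the NP-hard instance has a solution. The equality constraint $\sum_{i \in I} a_i = T$ would be forced by pairing an additive branch that imposes $\sum_{i \in I} a_i \geq T$ with an inverted, subtractive branch that imposes $\sum_{i \in I} a_i \leq T$, and using the ambiguity between these two complementary inequalities as the witness. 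Threading this construction through a single, otherwise unambiguous, $1$-VASS is the technically delicate part of the argument.
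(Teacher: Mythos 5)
Your overall strategy coincides with the paper's: ambiguity is reduced to nonemptiness of a $2d$-dimensional self-product carrying a ``have the two runs differed yet'' bit (this is exactly Lemma~\ref{lem:short-ambiguity-witness}), Proposition~\ref{prop:rackoff} then bounds the witness length, which yields the \expspace, \pspace and \nl upper bounds (the \nl case indeed relying on closure of \nl under complement), and the lower bounds come from planting a second accepting run that is triggered by a hard emptiness/reachability instance, respectively from a Perfect-Partition-style gadget in dimension one. The comparison is therefore mostly about details --- but one of the details you commit to is genuinely wrong.

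The gap is in the product construction. Your rule that ``for every $\eps$-transition $t_1$ of $\A$ there are asymmetric $\eps$-transitions advancing only one of the sides and forcing the flag to $\mathrm{diff}$'' makes $\A^{(2)}$ unsound: take an unambiguous $\A$ whose unique accepting run on $\eps$ is a single $\eps$-transition $t$; the product may advance the left copy with $t$ (setting $\mathrm{diff}$), then the right copy with $t$, and accept, even though the two projected runs are identical. Hence $\lang(\A^{(2)})\neq\emptyset$ does not imply ambiguity, and the claimed equivalence fails. The flag may only be set when a divergence of the two transition sequences is actually certified --- e.g.\ at a synchronous step where the two copies take \emph{different} transitions carrying the \emph{same} label (possibly $\eps$), with separate treatment of the case where one run is a proper $\eps$-extension of the other and of the case where the first divergence pits an $\eps$-transition against a letter transition (there one must remember the committed next transition of the lagging copy). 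The paper's proof of Lemma~\ref{lem:short-ambiguity-witness} is admittedly terse on exactly this point, but it does not adopt the unsound rule. Two smaller issues: for item (iii) you invoke \pspace-hardness of coverability/emptiness for binary $2$-\vass, which is true but not available off the shelf in this paper --- the paper derives the needed hardness by hand from bounded one-counter-automaton reachability \cite{FearnleyJ15} via the gadget of Proposition~\ref{prop:pspace-hard-univ}, which it then reuses; and for item (iv) your idea of enforcing equality as the conjunction of two complementary inequalities, one per branch, with ambiguity exactly when both final transitions can fire, is precisely the paper's Perfect-Partition construction (Proposition~\ref{prop:coNP-hardness} with the final decrements changed from $N+1$ to $N$), but you leave that construction --- which is the entire substance of item (iv) --- unexecuted.
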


The main technical contribution lies in the \expspace bounds on the universality problem in Theorem~\ref{thm:universality}\eqref{it:vass-expspace}. The upper bound will need some insights on the structure of accepting runs in unambiguous VASS which happen to have a universal language. The remaining upper bounds will follow easily from this one. The \pspace, and \conp lower bounds of items
\eqref{it:d-vass-binary-pspace}, and \eqref{it:ocn-binary-conp} are also of interest, as they reveal different ways in which unambiguity can encode non-trivial properties.
The \expspace lower bound of item \eqref{it:vass-expspace} follows easily from Lemma~\ref{lem:epsVASS-emptiness}.
All the remaining results of Theorems~\ref{thm:universality} and \ref{thm:testing-unambiguity} are either easy, or follow from simple adaptations of the three results just mentioned.

It is interesting to observe that complexity results on universality seem to coincide with the complexity of emptiness for the non-deterministic version of the considered classes. Notice also that closing the `gap' between \nctwo and \nlogspace in Theorem~\ref{thm:universality}\eqref{it:d-vass-unary-nctwo} would imply in particular solving the corresponding problem for UFA, which is an open question.

We observe that, as a corollary, we obtain procedures for testing the equivalence problem between an unambiguous VASS and a regular language. Indeed, the language of an unambiguous VASS $\A$ is equal to a regular language $L$ if, and only if, the VASS $\B$ resulting from the union of $\A$ and the DFA corresponding to the complement of $L$ is unambiguous and universal.

\paragraph*{Organization} 
We will prove Theorem~\ref{thm:universality} in Section~\ref{sec:universality} and Theorem~\ref{thm:testing-unambiguity} in Section~\ref{sec:unambiguity}. Each of these sections is divided into an ``upper bounds'' and ``lower bounds'' subsections. 
For reference, the upper and lower bounds of item \eqref{it:vass-expspace} of Theorem~\ref{thm:universality} are shown in Propositions~\ref{prop:expspace} and \ref{prop:univ-expspace-hard} respectively; item \eqref{it:d-vass-unary-nctwo}  in Propositions~\ref{prop:nctwo} and \ref{prop:nl-hard-univ}; item \eqref{it:d-vass-binary-pspace}  in Propositions~\ref{prop:pspace} and \ref{prop:pspace-hard-univ}; and item \eqref{it:ocn-binary-conp} in Proposition~\ref{prop:coNP-hardness}. 
The upper and lower bounds of item \eqref{it:unambcheck:vass-expspace} of Theorem~\ref{thm:testing-unambiguity} are shown in Propositions~\ref{prop:unamb-check} and \ref{prop:unamb-check-VASS-expspace-hard} respectively; item \eqref{it:unambcheck:d-vass-logspace}  in Propositions~\ref{prop:unamb-check} and~\ref{prop:unamb-check-dVASS-nl-hard}; item \eqref{it:unambcheck:d-vass-bin-pspace}  in Propositions~\ref{prop:unamb-check} and \ref{prop:unambcheck:pspace-hard}; and item \eqref{it:unambcheck:ocn-bin-conp} in Proposition~\ref{prop:conp-hard-unamb-checking}.
\section{Testing for Universality}
\label{sec:universality}
In this section we will prove Theorem~\ref{thm:universality}. Most of the section will be dedicated to proving the \expspace upper bound of item \eqref{it:vass-expspace}.


\subsection{Upper bounds}

\begin{proposition}[Theorem~\ref{thm:universality}\eqref{it:vass-expspace} upper bound]\label{prop:expspace}
The universality problem for unambiguous VASSes is in \expspace.
\end{proposition}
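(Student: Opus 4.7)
My plan is to reduce the universality problem for an unambiguous VASS $\A$ to the emptiness problem for an auxiliary $\epsilon$-VASS $\B$ of size polynomial in $|\A|$, and then appeal to Lemma~\ref{lem:epsVASS-emptiness} to obtain the \expspace upper bound. The payoff of this reduction is that the global \expspace cost is paid only once, inside the emptiness call, and the construction of $\B$ contributes no further complexity.

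The central leverage I would exploit is that unambiguity turns universality into a witness-existence problem. Since every $w \in \lang(\A)$ admits a unique accepting run, $\A$ fails to be universal iff some word $w \in \Sigma^*$ admits \emph{no} accepting run. The task of $\B$ is therefore to guess $w$ letter by letter and to accept precisely when the guessed prefix extends to a rejected word. The naive subset construction is unavailable, because the configuration space of $\A$ is infinite. However, unambiguity imposes strong structural constraints on the tree of runs of $\A$ on any prefix: roughly, one can hope that the ``live'' branches that could still conceivably accept can, on any ultimately rejected word, be summarized by a certificate whose size is polynomial in $|\A|$. The counters of $\B$ would then track the vector parts of these finitely many summaries, while the finite control of $\B$ tracks their state parts.

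The main obstacle is designing and justifying this compact certificate. I would approach it by an analysis of the branch structure of runs in unambiguous VASS: classifying the live partial runs according to profiles of visited states, projecting out dominated configurations (those $\preceq$-above others of the same state), and using Rackoff-style bounds from Proposition~\ref{prop:rackoff} to justify that branches whose counters grow past a doubly-exponential threshold can be safely retained in ``saturated'' form without affecting acceptance. Unambiguity is what prevents a proliferation of incomparable live branches: too many accepting continuations from incomparable configurations would force the existence of two accepting runs on some common extension, contradicting unambiguity of $\A$.

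Once such a $\B$ is constructed and its correctness established ($\A$ non-universal iff $\lang(\B) \neq \emptyset$), the upper bound is immediate: $\B$ is an $\epsilon$-VASS constructible in polynomial time from $\A$, and by Lemma~\ref{lem:epsVASS-emptiness} its emptiness is decidable in \expspace. The hardest step, and the one I expect to require most of the technical work, is the structural lemma that licenses the polynomial-dimension certificate; the rest is bookkeeping.
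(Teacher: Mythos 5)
There is a genuine gap, and the approach runs in the opposite direction from the paper's. The paper does \emph{not} reduce universality to emptiness of a small $\epsilon$-VASS; it abstracts configurations by $\omega$-profiles (capping every coordinate at $\omega = C_{|V|,d,n}$, a doubly-exponential threshold) and builds a doubly-exponential-size unambiguous \emph{finite} automaton $\A_V$ that is universal iff $V$ is (Lemma~\ref{lem:language-equivalence}); universality of $\A_V$ is then checked in \nctwo, hence \polylogspace, of its size, and $\expspace \circ \polylogspace \subseteq \expspace$. The structural fact that licenses this is Lemma~\ref{lem:profile-reachable-ext} and Corollary~\ref{cor:profile-reachable}: in a universal unambiguous VASS, reachable configurations with equal $B_{|V|,d,n}$-profile have identical sets of accepting runs, so no coordinate can drop by $C_{|V|,d,n}$ or more along an accepting run, and the capping is therefore faithful. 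So the ``hard structural lemma'' you defer to is real, but it yields a doubly-exponential finite-state abstraction on which one checks universality --- not a polynomial certificate fed to a VASS emptiness oracle.

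The concrete obstruction to your plan is that your $\B$ must accept exactly when the guessed word $w$ admits \emph{no} accepting run of $\A$, which is a universally quantified condition over $\A$'s runs. A VASS cannot verify it: certifying that a branch of $\A$ \emph{dies} means certifying that some counter is \emph{below} a threshold, and VASS counters admit only lower-bound tests (subtract and block), not upper-bound tests; your letter-by-letter simulation is a complementation in disguise, and the paper explicitly notes (citing \cite{CzerwinskiLMMKS18}) that the only VASSes whose complement is a VASS are those with regular languages. Your justification for the polynomial certificate is also unsound: unambiguity does not bound the number of pairwise $\preceq$-incomparable live configurations, since incomparable configurations with \emph{disjoint} sets of accepting continuations create no ambiguity on any common extension; correspondingly, the faithful abstraction the paper obtains has doubly-exponentially many classes, not polynomially many. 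To salvage an \expspace bound you would have to either find a genuinely different mechanism for witnessing rejection inside a VASS (none is apparent), or follow the paper's route through the profile automaton.
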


The proof strategy is as follows. First, we define an abstraction of a configuration, called an $N$-profile, for $N \in \N$, which is the result of replacing every number bigger than or equal to $N$ with $N$ in a configuration. The intuition is that any number bigger or equal $N$ is so big that we can disregard its exact value.
We next show that in certain circumstances, for any unambiguous $d$-VASS $V$ with $n$ states two configurations having equal $f(|V|,d,n)$-profile have also the same language, where $f$ is some fixed doubly-exponential function. This fact allows us to construct an unambiguous finite automaton $\A$ of doubly-exponential size, whose every state corresponds to one $f(|V|, d, n)$-profile, and such that $\A$ is universal if, and only if, $V$ is universal. As universality of UFAs is in \nctwo and therefore in \polylogspace, this gives us an \expspace algorithm for checking universality.

For any number $N \in \N$, the \defstyle{$N$-profile} of a configuration $(q,\bar v) \in Q \times \N^d$ is the pair $(q, \min(\bar v, N \cdot \bar 1))$.
Let $B_{M,d,n} = M \cdot A_{M,2d,2n^2}$, and let $C_{M,d,n} = M \cdot (B_{M,d,n} + 1)^d$.

We start with a useful lemma which bounds the length of runs witnessing ambiguity.
\begin{lemma}\label{lem:short-ambiguity-witness}
Let $V$ be a $d$-VASS with norm $M$ and $n$ states.
If $V$ is ambiguous then there exist two different runs accepting the same word of length at most $A_{M, 2d,2n^2}$ each.
\end{lemma}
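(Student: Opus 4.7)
The natural approach is to build a product construction of $V$ with itself, of dimension $2d$ and with at most $2n^2$ states, whose non-emptiness is equivalent to the ambiguity of $V$; Proposition~\ref{prop:rackoff} then immediately gives the desired length bound $A_{M,2d,2n^2}$.

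More concretely, I would define a $2d$-VASS $V'$ whose states are triples $(q_1,q_2,b) \in Q \times Q \times \set{0,1}$, whose initial state is $(q_0,q_0,0)$, and whose accepting states are those $(f_1,f_2,1)$ with $f_1,f_2 \in F$. The bit $b$ records whether the two simulated runs have diverged so far. The transitions of $V'$ simulate a pair of runs of $V$ on the same input word: for every pair of letter-transitions $q_1 \trans{a;\bar v_1} q_1'$ and $q_2 \trans{a;\bar v_2} q_2'$ of $V$ with the same label $a \in \Sigma$, add a synchronous transition $(q_1,q_2,b) \trans{\epsilon;(\bar v_1,\bar v_2)} (q_1',q_2',b')$ where $b' = 1$ if $b=1$ or the two transitions differ, and $b' = b$ otherwise; for every $\epsilon$-transition $q_1 \trans{\epsilon;\bar v_1} q_1'$ of $V$, add asynchronous transitions $(q_1,q_2,b) \trans{\epsilon;(\bar v_1,\bar 0)} (q_1',q_2,1)$ that advance only the first component, and symmetrically for $\epsilon$-transitions of the second component. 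Note that asynchronous moves always set the divergence flag, since the two runs disagree on whether to read an $\epsilon$-transition at that step. The alphabet of $V'$ can be chosen arbitrarily (e.g., all transitions read $\epsilon$), since we only care about language non-emptiness of $V'$.

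The key claim to verify is that $V'$ has a non-empty language if and only if $V$ is ambiguous. For the easy direction, given two distinct accepting runs $\rho_1,\rho_2$ of $V$ on a common word $w$, one interleaves them in the obvious way: pair up matching letter-transitions, and handle the $\epsilon$-transitions of each run asynchronously. Since $\rho_1 \neq \rho_2$ as sequences of transitions, at some step either they take different letter-transitions or one performs an $\epsilon$-move that the other does not, and in either case the flag gets set to $1$; both components end in final states, so the simulation reaches an accepting state of $V'$. For the converse, any accepting run of $V'$ projects to two runs of $V$ reading the same word and ending in $F$, and the fact that the flag was raised guarantees that the transition sequences actually differ.

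Once this equivalence is established, the conclusion is immediate: $V'$ is a $2d$-VASS of norm $M$ with $2n^2$ states (the pairs $(q_1,q_2,b)$), so Proposition~\ref{prop:rackoff} yields an accepting run of $V'$ of length at most $A_{M,2d,2n^2}$, whose two projections are distinct accepting runs of $V$ on the same word, each of length at most the length of the product run. The main subtlety to get right is the treatment of $\epsilon$-transitions and the divergence flag, so that the bit $b$ truly certifies distinctness of the two underlying runs rather than merely distinctness of intermediate configurations; everything else is routine bookkeeping.
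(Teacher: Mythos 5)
Your overall strategy is exactly the paper's: build a $2d$-VASS product of $V$ with itself on states $Q\times Q\times\set{0,1}$ (hence $2n^2$ states, norm $M$), argue its language is nonempty iff $V$ is ambiguous, and invoke Proposition~\ref{prop:rackoff} to get a run of length at most $A_{M,2d,2n^2}$ whose two projections are the desired witnesses. The paper's own proof is in fact terser than yours and does not spell out the $\epsilon$-transition bookkeeping at all.

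However, the one place where you do commit to a concrete rule is wrong, and it is precisely the step the lemma hinges on. You stipulate that every asynchronous $\epsilon$-move sets the divergence flag, ``since the two runs disagree on whether to read an $\epsilon$-transition at that step.'' This is not a property of the two runs as transition sequences; it is a property of the particular interleaving you chose. Take a single accepting run $\rho = t_1 t_2$ of $V$ with $t_1$ an $\epsilon$-transition, and pair $\rho$ with itself: the product may fire $t_1$ asynchronously on the first component (flag $:=1$), then $t_1$ asynchronously on the second, then $t_2$ synchronously, reaching an accepting state of $V'$ with the flag raised even though the two projections are the \emph{same} run. So $\lang(V')\neq\emptyset$ does not certify ambiguity, and — what actually breaks the proof — the short accepting run of $V'$ handed to you by Proposition~\ref{prop:rackoff} is an arbitrary one, so its two projections may coincide and you get no ambiguity witness. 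To repair this, the flag must only be set by an event that genuinely certifies that the two transition sequences differ: e.g., keep the two components in strict lockstep taking \emph{identical} transitions while the flag is $0$, and set the flag only via a product move in which both components simultaneously commit to their next transitions and these are verified to be distinct (including the asymmetric cases where one run takes an $\epsilon$-transition while the other takes a letter-transition, or where one run is a proper prefix of the other and the longer one continues with $\epsilon$-transitions); only after the flag is $1$ may the two components interleave freely. This requires a little extra care (and, in the asymmetric letter-versus-$\epsilon$ case, a small amount of additional state to resynchronize word positions), which you should either carry out or at least acknowledge, since as stated the soundness direction of your key claim is false.
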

\begin{proof}
Consider the following $2d$-VASS $V'$, which accepts exactly these words, which have at least two different accepting runs from the initial configuration of $V$.
The VASS $V'$ guesses two different runs of $V$ and simulates them, it is quite similar to a synchronized product of $V$ with itself.
In its $2d$ counters $V'$ keeps counter valuations of two configurations of $V$ of the simulated runs.
State of $V'$ is a pair of states of $V$ together with one bit of information indicating whether the two simulated runs have already differed or they are the same
till that moment. VASS $V'$ accepts if states of both simulated runs are accepting and the bit indicates that they have differed (even if now they are in the same state).
It is easy to see that $V'$ indeed accepts words, which have two different accepting runs in $V$.
Therefore if $V$ is ambiguous then $\lang(V')$ is nonempty.
Notice that the norm of $V'$ is bounded by $M$, as the norm of $V$ is. Therefore by Proposition~\ref{prop:rackoff} if $\lang(V')$ is nonempty then there is an accepting run of $V'$
of length at most $A_{M, 2d, 2n^2}$. Notice that the existence of such a run implies the existence of two different runs of $V$ over the same word, which additionally
also have length bounded by $A_{M, 2d, 2n^2}$. This finishes the proof.
\end{proof}

We state two basic properties of VASS which will be useful throughout.

\begin{claim}\label{claim:profile-language}
For any two configurations $c$ and $c'$ of a VASS $V$ with equal $(|V|\cdot N)$-profile, if $\rho$ is an accepting run from $c$ of length at most $N$ then $\rho$ is also accepting from $c'$.
\end{claim}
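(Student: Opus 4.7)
The plan is to unpack the definition of $N$-profile and then verify directly that $\rho$ is a valid accepting run from $c'$ coordinate by coordinate. By definition of $(|V|\cdot N)$-profile, writing $M = |V|$, the configurations $c$ and $c'$ share the same state component, and for each coordinate $i \in [1,d]$ either $c[i] = c'[i] < M \cdot N$, or else both $c[i], c'[i] \geq M \cdot N$. I would fix the transition sequence of $\rho$, say $t_1 \dotsb t_k$ with $k \leq N$, and let $\bar e_j$ denote the effect of the prefix $t_1 \dotsb t_j$; by assumption, the prefix sums $c + \bar e_j$ belong to $\N^d$ for every $j$, and the run ends in a final state.

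The task reduces to showing that $c'[i] + \bar e_j[i] \geq 0$ for all $i$ and all $j \leq k$. In the first case above, $c[i] = c'[i]$, so this prefix value is identical to the one obtained from $c$ and is therefore non-negative by assumption on $\rho$. In the second case, each of the $j \leq k \leq N$ transitions modifies coordinate $i$ by at most $M$ in absolute value, so $\bar e_j[i] \geq -M \cdot N$; since $c'[i] \geq M \cdot N$, we obtain $c'[i] + \bar e_j[i] \geq 0$ as desired.

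Thus $\rho$ is a legal run from $c'$ as well. Moreover, the sequence of states visited and, in particular, the final state reached depend only on $\rho$ and not on the starting counter values, so this run ends in the same accepting state. Hence $\rho$ is accepting from $c'$, completing the proof.

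There is essentially no obstacle here: the claim is a direct consequence of the fact that a run of length $N$ in a VASS with norm $M$ can change any counter by at most $M \cdot N$, so thresholding counters at $M \cdot N$ suffices to preserve the legality and outcome of such runs. The lemma's role is evidently to serve as a building block for comparing languages of configurations via finitely many profiles.
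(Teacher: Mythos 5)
Your proof is correct; the paper states this claim without proof as a ``basic property,'' and your argument (splitting coordinates according to whether the profile truncates them, and bounding the prefix effects of a length-$\leq N$ run by $|V|\cdot N$ in absolute value) is exactly the intended justification. The only point worth double-checking, which you handled correctly, is that equality of $(|V|\cdot N)$-profiles means each coordinate is either literally equal or at least $|V|\cdot N$ in both configurations.
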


\begin{claim}[language monotonicity]\label{claim:monotonicity}
If $q(\bar u)$ and $q(\bar v)$ are two configurations of a VASS $V$ with $\bar u \preceq \bar v$ then $\lang(V, q(\bar u)) \subseteq \lang(V, q(\bar v))$.
\end{claim}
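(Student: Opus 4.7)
The plan is to show that every accepting run from the smaller configuration $q(\bar u)$ lifts to an accepting run from the larger configuration $q(\bar v)$, reading the same word, by a straightforward simulation argument.

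Concretely, fix $w \in \lang(V, q(\bar u))$ and an accepting run $\rho = t_1 \cdots t_n$ from $q(\bar u)$ to some $q'(\bar u')$ with $q' \in F$. Let $q_k(\bar u_k)$ be the configuration reached from $q(\bar u)$ after the prefix $t_1 \cdots t_k$, so $q_0(\bar u_0) = q(\bar u)$ and $q_n(\bar u_n) = q'(\bar u')$. I would show by induction on $k \in [0,n]$ that $t_1 \cdots t_k$ is a valid sequence of transitions from $q(\bar v)$ leading to a configuration $q_k(\bar v_k)$, where $\bar v_k = \bar u_k + (\bar v - \bar u)$. The base case $k=0$ is immediate. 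For the inductive step, writing $t_{k+1} = (q_k, \alpha, \bar e, q_{k+1})$, the same transition is applicable from $q_k(\bar v_k)$ provided $\bar v_k + \bar e \in \N^d$; but $\bar v_k + \bar e = \bar u_k + \bar e + (\bar v - \bar u) = \bar u_{k+1} + (\bar v - \bar u) \succeq \bar u_{k+1} \succeq \bar 0$, using $\bar u \preceq \bar v$ and the validity of $\rho$. Thus the step is legal and lands in $q_{k+1}(\bar v_{k+1})$, with $\bar v_{k+1} = \bar u_{k+1} + (\bar v - \bar u)$.

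After processing all $n$ transitions, the resulting run from $q(\bar v)$ reads exactly the same word $w = \alpha_1 \cdots \alpha_n$ (labels are untouched), ends in the same final state $q' \in F$, and is therefore accepting. Hence $w \in \lang(V, q(\bar v))$, which yields the desired inclusion.

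There is no real obstacle: the claim is the standard monotonicity property of VASS runs under coordinatewise ordering, and the only thing to keep track of is the invariant that the offset $\bar v - \bar u \succeq \bar 0$ is preserved along the simulation, which suffices to guarantee non-negativity of all intermediate counter values.
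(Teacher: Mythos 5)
Your proof is correct: the induction maintaining the offset $\bar v - \bar u \succeq \bar 0$ along the run is exactly the standard monotonicity argument, and the paper itself states this claim without proof, treating it as a basic property. Nothing is missing.
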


The following is the key lemma which will enable the improved complexity for the universality problem.
\begin{lemma}\label{lem:profile-reachable-ext}
Let $V$ be a universal, unambiguous $d$-VASS with $n$ states.
Then, any two configurations with equal $B_{|V|,d,n}$-profile reachable from the initial configuration have the same set of accepting runs (in particular, they have the same language).
\end{lemma}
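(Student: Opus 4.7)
The plan is to show that every accepting run $\rho$ from $c_1 = q(\bar u)$ is also an accepting run from $c_2 = q(\bar v)$; by symmetry this establishes equality of the accepting-run sets. Writing $A := A_{|V|, 2d, 2n^2}$, we have $B := B_{|V|,d,n} = |V| \cdot A$, and since the end state of $\rho$ depends only on its sequence of transitions, the assertion ``$\rho$ accepting from $c_2$'' reduces to ``$\rho$ valid from $c_2$''.

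If $|\rho| \leq A$, the conclusion is immediate from Claim~\ref{claim:profile-language} applied with $N = A$. For the harder case $|\rho| > A$, I plan to argue by contradiction. Assume $\rho$ is not valid from $c_2$, and let $k$ be the first step at which some counter $i$ becomes negative under $\rho$ from $c_2$; let $e_k$ denote the effect of $\rho_{\leq k}$. Then $c_2[i] + e_k[i] < 0 \leq c_1[i] + e_k[i]$. Equality of the $B$-profiles forces $c_1, c_2$ to coincide on every coordinate of value $< B$, so $c_1[i], c_2[i] \geq B$ with $c_2[i] < c_1[i]$. Thus $e_k[i] \leq -c_2[i] - 1 \leq -B - 1$, which gives $k > A$. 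In particular, $\rho_{\leq k}$ is a long ``separator'' whose length exceeds the Rackoff bound $A_{|V|, 2d, 2n^2}$ used in Lemma~\ref{lem:short-ambiguity-witness}.

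To derive the contradiction, I intend to use this long separator to exhibit two distinct accepting runs of $V$ on a common word from $q_0(\bar 0)$, violating the unambiguity assumption. Let $\pi_1, \pi_2$ be runs from $q_0(\bar 0)$ to $c_1, c_2$ reading words $u_1, u_2$ (provided by reachability). I consider the $2d$-VASS $V'$ of Lemma~\ref{lem:short-ambiguity-witness}, whose states track pairs of $V$-states with a ``differed'' bit and whose two copies can also fire $\epsilon$-transitions of $V$ independently. First, I align the two copies at the pair $(c_1, c_2)$: fire $\pi_1$ on copy~1 while copy~2 idles via $\epsilon$-moves, then fire $\pi_2$ on copy~2 while copy~1 idles, so that $V'$ reads $u_1 u_2$ and ends at $(c_1, c_2)$. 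From there the copies synchronize on letters: copy~1 follows $\rho$, accepting its word $w$ from $c_1$; copy~2 must follow a distinct transition sequence on $w$ also reaching an accepting state. Such a sequence is supplied by universality of $V$: the word $u_1 u_2 w$ is accepted from $q_0(\bar 0)$, and the $w$-suffix of its unique accepting run provides a trajectory for copy~2 compatible with the already-fixed prefix. Because $\rho$ is not valid from $c_2$, this $w$-suffix necessarily differs from $\rho$, setting the differed bit; both copies reach accepting states, so $V'$ has an accepting run, i.e.\ $V$ has two distinct accepting runs on $u_1 u_2 w$, contradicting unambiguity.

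The hardest step will be making the last paragraph fully rigorous, specifically verifying that the universality-provided accepting run of $V$ on $u_1 u_2 w$ can be threaded through the $V'$-configuration $(c_1, c_2)$ in such a way that its $w$-suffix is realisable as a copy~2 run from $c_2$: the prefix of this run reading $u_1 u_2$ may or may not follow the ``align first $u_1$ then $u_2$'' schedule chosen for $V'$, and one needs to argue that an alignment yielding copy~2 at $c_2$ can always be arranged, possibly by repeatedly invoking universality and unambiguity on padded variants of the word. The numerical choice $B = |V| \cdot A_{|V|, 2d, 2n^2}$ is precisely calibrated to match the bound of Lemma~\ref{lem:short-ambiguity-witness}: the separator length $k > A$ ensures the two parallel accepting runs genuinely diverge on the $w$-segment and thus produce a real ambiguity witness, rather than one already ruled out by the short-witness bound.
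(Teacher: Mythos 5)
Your reduction to run validity, and the observation that the first failure point of $\rho$ from $c_2$ must lie beyond depth $A_{|V|,2d,2n^2}$, are both correct, but the core of the argument --- manufacturing an ambiguity witness from the long run $\rho$ --- has a genuine gap, in fact two. First, the alignment manoeuvre is not available in the synchronized product of Lemma~\ref{lem:short-ambiguity-witness}: when the product reads a non-$\epsilon$ letter of $u_1$, \emph{both} copies must fire a transition on that letter, so copy~2 cannot ``idle'' while copy~1 executes $\pi_1$. More fundamentally, an ambiguity witness consists of two distinct runs of $V$ from $q_0(\bar 0)$ reading the \emph{same} word, whereas your two trajectories would read $u_1 w$ and $u_2 w'$ respectively, which differ in general. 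Second --- the gap you flag yourself --- universality only yields \emph{some} accepting run on $u_2 w$; nothing forces it to pass through $c_2$ after reading $u_2$, so its $w$-suffix need not be a run from $c_2$ and cannot serve as the second branch. ``Repeatedly invoking universality on padded variants'' does not obviously repair this, and no concrete mechanism is given.

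The paper's proof avoids both problems by \emph{not} insisting that the second run go through $c_2$ and by \emph{not} using the long run $\rho$ directly. Writing $c_\init \trans{u} c_2$ and letting the universality-provided accepting run on $uw$ pass through whatever configuration $c'_2$ it reaches after $u$, one has $w \in \lang(V,c_1) \cap \lang(V,c'_2)$; applying Proposition~\ref{prop:rackoff} to the synchronized intersection product started at $(c_1,c'_2)$ yields a \emph{short} word $w'$ in that intersection with a short accepting run $\rho_1$ from $c_1$, and the calibration $B_{|V|,d,n} = |V| \cdot A_{|V|,2d,2n^2}$ is exactly what lets Claim~\ref{claim:profile-language} transfer this short $\rho_1$ from $c_1$ to $c_2$. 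This gives two accepting runs on $uw'$, one through $c_2$ and one through $c'_2$, which are distinct when $c_2 \neq c'_2$; the remaining case $c_2 = c'_2$ is handled by Lemma~\ref{lem:short-ambiguity-witness} applied from $\max(c_1,c_2)$ together with Claim~\ref{claim:monotonicity}. In short, the role of the bound $B_{|V|,d,n}$ is to transfer \emph{short} runs between equal-profile configurations, not to certify that the divergence point of the long run $\rho$ is deep; the shortening step via Rackoff is the idea your proposal is missing.
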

\begin{proof}
By means of contradiction, let $c_1, c_2 \in Q \times \N^d$ be two configurations reachable from the initial configuration $c_\init$ with the same $B_{|V|,d,n}$-profile, but different sets of accepting runs.
Let $\rho$ be an accepting run from $c_1$ but not from $c_2$, reading the word $w$.

Let $c_\init \trans{u} c_2$. The word $uw$ is accepted by $V$ since it is universal, so there must be a configuration $c'_2$ such that
$c_\init \trans{u} c'_2$ and $w \in \lang(V, c'_2)$. Therefore $w$ is accepted both from configuration $c_1$ with the run $\rho$ and from configuration $c'_2$ with some accepting run $\hat\rho$. There are two cases to consider: either (i) $c_2 \neq c'_2$, or (ii) $c_2 = c'_2$ and $\hat\rho \neq \rho$.

For (i), let us first consider an (ambiguous) VASS $\tilde V$, being the result of adding $\eps$-labelled self-loops with effect $\bar 0$ in every state to $V$. 
Clearly, for every configuration $c$ we have $\lang(V, c) = \lang(\tilde V, c)$.
Let us confider a $2d$-VASS $V'$, which is a synchronized product of $\tilde V$ with itself: transitions, initial and accepting states are defined in a natural way.
Product is synchronized, so for any $a \in \Sigma_\eps$ there is an $a$-labelled transition in the product $V'$ iff there exist $a$-labelled transitions in the two components, both identical with $\tilde V$.
For two configurations $c = q(\bar u)$ and $c' = q'(\bar u')$ of $V$ we denote by $\lang(V', c, c')$ the language $\lang(V', (q,q')(\bar u, \bar u'))$.
Notice that, by construction, $\lang(V', c, c')$ is the intersection of $\lang(V, c)$ and $\lang(V, c')$. Therefore the word $w$ belongs to $\lang(V', c_1, c'_2)$.
By Proposition~\ref{prop:rackoff} there exists an accepting run $\rho'$ of $V'$ of length at most $A_{|V|,2d,n^2}$ reading a word $w'$ from $\lang(V', c_1, c'_2) = \lang(V, c_1) \cap \lang(V, c'_2)$. Consider the projection $\rho_1$ of $\rho'$ onto the first copy of $\tilde V$. We know thus that $\rho_1$ is accepting from $c_1$. Further, the  absolute value of the effect of $\rho_1$ on every coordinate is at most $|V| \cdot A_{|V|,2d,n^2} \leq |V| \cdot A_{|V|,2d,2n^2} = B_{|V|,d,n}$.
Recall that $c_1$ and $c_2$ have the same $B_{|V|,d,n}$-profile, so by~\cref{claim:profile-language} if $\rho_1$ is accepting from $c_1$ then it is also accepting from $c_2$.
Therefore $w' \in \lang(V, c_2)$ and $w' \in \lang(V, c'_2)$, which means that there are two distinct accepting runs over $u w'$ in $V$, contradicting the fact that it is unambiguous.

For (ii), we have that there are two distinct accepting runs for $w$ from $\max(c_1,c_2)$, namely $\rho$ and $\hat\rho$. Then, by Lemma~\ref{lem:short-ambiguity-witness}, there exist two different runs $\rho_1$ and $\rho_2$ from $\max(c_1,c_2)$ of length at most $A_{|V|,2d,2n^2}$ accepting the same word $w'$. Since $c_1$ and $c_2$ have the same $B_{|V|,d,n}$-profile, where $B_{|V|,d,n} = |V| \cdot A_{|V|,2d,2n^2}$, by Claim~\ref{claim:profile-language} both $\rho_1$ and $\rho_2$ are accepting from configuration $c_2$, and thus there are two distinct accepting runs over $uw'$ in $V$, contradicting the fact that it is unambiguous.
\end{proof}


\begin{corollary}\label{cor:profile-reachable}
If a universal, unambiguous $d$-VASS $V$ with $n$ states contains an accepting run with two configurations $c_1$ and $c_2$ such that $c_1$ occurs before $c_2$, then 
\begin{enumerate}[(i)]
	\item \label{cor:it:profiles-increment} if $c_1$ and $c_2$ have equal $B_{|V|,d,n}$-profile, then $c_1 \preceq c_2$;
	\item \label{cor:it:no-big-decrements} for every $i \in [1,d]$, $c_1[i] - c_2[i] < C_{|V|,d,n}$.
\end{enumerate}
\end{corollary}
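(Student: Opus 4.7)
The plan is to derive both statements from Lemma~\ref{lem:profile-reachable-ext}. Item (i) will come from an iterated-pumping argument applied to the cycle from $c_1$ to $c_2$, and item (ii) will follow via a pigeonhole argument on $B$-profiles that exploits~(i).

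For (i), decompose the given accepting run as $c_\init \trans{\sigma_1} c_1 \trans{\tau} c_2 \trans{\sigma_2} c_f$ with $c_f$ final. Lemma~\ref{lem:profile-reachable-ext} tells us that $c_1$ and $c_2$ admit exactly the same set of accepting runs, so $\tau\sigma_2$\,---\,being accepting from $c_1$\,---\,is also accepting from $c_2$. A short induction then shows that $\tau^k\sigma_2$ is an accepting run from both $c_1$ and $c_2$ for every $k\ge 0$: the step uses the factoring $\tau^{k+1}\sigma_2 = \tau \cdot (\tau^k\sigma_2)$, which is accepting from $c_1$ (since $\tau$ reaches $c_2$, where $\tau^k\sigma_2$ is accepting by the induction hypothesis), and hence by Lemma~\ref{lem:profile-reachable-ext} also accepting from $c_2$. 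Validity of $\tau^{k+1}\sigma_2$ from $c_2$ then forces the configuration reached after $k$ copies of $\tau$, namely $(k+1)c_2 - kc_1$, to be coordinatewise non-negative. Dividing by $k+1$ and letting $k\to\infty$ yields $c_2\succeq c_1$.

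For (ii), suppose for contradiction that $c_1[i]-c_2[i]\ge C_{|V|,d,n}$ for some $i$, and write $B = B_{|V|,d,n}$. Since every transition changes coord $i$ by at most $|V|$, the sub-run from $c_1$ to $c_2$ has length at least $(B+1)^d$. Pick configurations $d_0, d_1, \ldots, d_{(B+1)^d}$ at first-passage times through the successive thresholds $c_1[i], c_1[i]-|V|, c_1[i]-2|V|, \ldots$; by construction their $i$-th coordinates are strictly decreasing. By item~(i), any two of them with equal $B$-profile would have to be $\preceq$-ordered, contradicting the strict decrease of coord $i$; hence all $(B+1)^d+1$ selected configurations must have pairwise distinct $B$-profiles. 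A pigeonhole on the finitely many available $B$-profiles then delivers the contradiction.

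The main subtlety lies in the inductive step of~(i): showing that $\tau^{k+1}\sigma_2$ is a \emph{valid} run from $c_2$ (i.e., no intermediate counter becomes negative, including inside each copy of $\tau$) cannot be argued by monotone comparison with the run from $c_1$, since $c_1\preceq c_2$ is precisely the conclusion we are trying to reach. The argument sidesteps this circularity by first establishing acceptance from $c_1$ and then transferring it to $c_2$ through Lemma~\ref{lem:profile-reachable-ext}, which packages validity together with acceptance. A secondary care point in~(ii) is that the configurations $d_k$ must be chosen as first-passage configurations, so that the strict monotonicity of their $i$-th coordinates is guaranteed even when the sub-run oscillates.
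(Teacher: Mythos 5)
Your proof follows the paper's own argument essentially step for step: both items are derived from Lemma~\ref{lem:profile-reachable-ext}, item (i) by iterating the segment between $c_1$ and $c_2$ and transferring acceptance between the two configurations, and item (ii) by a pigeonhole on $B_{|V|,d,n}$-profiles along the sub-run whose $i$-th coordinate strictly decreases. The only difference is presentational: you run (i) as a direct limiting argument (validity of $\tau^{k+1}\sigma_2$ from $c_2$ forces $(k+1)c_2-kc_1\succeq 0$ for all $k$), whereas the paper argues by contradiction via a $k$ for which $(\rho_2)^k\rho_3$ accepts from $c_1$ but not from $c_2$ --- the underlying pumping idea is identical.
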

\begin{proof}
\eqref{cor:it:profiles-increment} By means of contradiction, let $c_1$ and $c_2$ be configurations with the same profile such that $c_1 \not\preceq c_2$, meaning that $c_1[i] > c_2[i]$ for some $i$.
Let $\rho_1\rho_2\rho_3$ be an accepting run of $V$, such that $\rho_1$ reaches the configuration $c_1$ from the initial configuration, and $\rho_2$ reaches the configuration $c_2$ from configuration $c_1$. Since the effect of $\rho_2$ decrements component $i$, it is easy to see that there is some $k \in \N$ such that $(\rho_2)^k\rho_3$ is an accepting run from $c_1$ but not from $c_2$, contradicting Lemma~\ref{lem:profile-reachable-ext} above.

\eqref{cor:it:no-big-decrements} Suppose there is a decrement of at least $C_{|V|,d,n}$ at some coordinate $i$. Since $C_{|V|,d,n} = |V| \cdot (B_{|V|,d,n} + 1)^d$ is at least the number of
$B_{|V|,d,n}$-profiles times the biggest effect of a transition, this means that at least $k=B_{|V|,d,n}$ distinct configurations $c'_1, \dotsc, c'_k$  occur in the run between $c_1$ and $c_2$ such that $c_1[i] > c'_1[i]>c'_2[i] > \dotsb > c'_k[i]$. Hence, among $c_1, c'_1, \ldots, c'_k$ there must be two equal $B_{|V|,d,n}$-profile configurations, contradicting the item \eqref{cor:it:profiles-increment} above.
\end{proof}

This last statement can be informally understood as follows: if $V$ is universal, then it is still universal if configurations are abstracted by their $C_{|V|,d,n}$-profiles. We now formalize what this means.
\newcommand{\toProf}[1]{\lfloor #1 \rfloor}%
Let us fix an unambiguous VASS $V$, and let us henceforth write $\omega$ as short for $C_{|V|,d,n}$. For any configuration $c$ let $\toProf{c}$ denote its $\omega$-profile, that is, $\toProf{q(\bar u)} = q(\min(\bar u, \omega \cdot \bar 1))$.
Let $V = (\Sigma, d ,Q_V, q_V, \delta_V, F_V)$ be an unambiguous VASS.
We construct a finite automaton $\A_V = (\Sigma, Q_\A, q_\A, \delta_A, F_\A)$ in the following way:
\begin{itemize}
  \item the set of states $Q_\A$ is the set of pairs $Q_V \times [0, \omega]^d$;
  \item the initial state $q_\A$ is $q_V(\bar 0)$;
  \item the set of final states $F_\A$ consists of all the pairs having the first coordinate in $F_V$, namely $F_\A = F_V \times [0, \omega]^d$;
  \item $\delta_\A$ is the set of all transitions $p(\bar u) \trans{a} q(\toProf{\bar u+\bar v})$ such that $(p, a, \bar v, q) \in \delta_V$ and $\bar u + \bar v \in \N^d$.
\end{itemize}

We now show that $\A_V$ is unambiguous, and that it is universal iff $V$ is universal.
\begin{lemma}\label{lem:V-simulates-AV}
	For every run $p_1(\bar u_1) \trans{a_1} p_2(\bar u_2) \trans{a_2} \dotsb p_{n}(\bar u_{n}) \trans{a_{n}} p_{n+1}(\bar u_{n+1})$  of $\A_V$ there is a run $(p_1,a_1,\bar v_1,p_2) \dotsb (p_{n},a_{n},\bar v_{n},p_{n+1})$ of $V$ such that $\bar v_1 + \dotsb + \bar v_i \geq \bar u_i$ for every $i \in [1,n]$.
\end{lemma}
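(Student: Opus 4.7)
The plan is a straightforward induction on the length $n$ of the run, with the monotonicity of the profile operation $\toProf{\cdot}$ as the only real ingredient. Observe that by the definition of $\delta_\A$, each transition $p_i(\bar u_i) \trans{a_i} p_{i+1}(\bar u_{i+1})$ of $\A_V$ comes from a transition $(p_i, a_i, \bar v_i, p_{i+1}) \in \delta_V$ satisfying $\bar u_{i+1} = \toProf{\bar u_i + \bar v_i} = \min(\bar u_i + \bar v_i, \omega \cdot \bar 1)$, and in particular $\bar u_{i+1} \leq \bar u_i + \bar v_i$ componentwise. The candidate $V$-run is simply the list of these underlying transitions, in order; the task is to verify both that it is a valid run of $V$ (partial sums lie in $\N^d$) and that the claimed inequality $\bar v_1 + \dotsb + \bar v_i \geq \bar u_i$ holds.

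First I would set $\bar s_i := \bar v_1 + \dotsb + \bar v_i$ (with $\bar s_0 = \bar 0$) and prove by induction on $i$ that $\bar s_i \geq \bar u_i$. The base case $i=0$ uses $\bar u_1 = \bar 0$ in combination with the fact that $p_1 = q_V$ is the initial state, so there is nothing to show. For the step, assume $\bar s_{i-1} \geq \bar u_{i-1}$. Then
\[
\bar s_i \;=\; \bar s_{i-1} + \bar v_i \;\geq\; \bar u_{i-1} + \bar v_i \;\geq\; \min(\bar u_{i-1} + \bar v_i, \omega \cdot \bar 1) \;=\; \bar u_i,
\]
where the first inequality is the induction hypothesis and the second is a pointwise property of $\min$. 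Since $\bar u_i \in \N^d$, this also gives $\bar s_i \in \N^d$, so every partial sum is non-negative and the listed transitions form a valid run of $V$ from $q_V(\bar 0)$, completing the proof.

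There is no real obstacle here: the lemma is essentially saying that the profile automaton $\A_V$ underapproximates the true counter values of $V$, which is built into the definition of $\delta_\A$ via the truncation $\toProf{\cdot}$. The content of the lemma will be used later, presumably in the converse direction (to lift accepting runs of $\A_V$ to accepting runs of $V$, exploiting Claim~\ref{claim:monotonicity}).
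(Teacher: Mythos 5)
Your overall strategy coincides with the paper's: replace each transition of $\A_V$ by the $V$-transition that induced it, and verify by induction that the concrete partial sums dominate the abstract counter values (the paper's own proof is exactly this induction, stated in one line). However, your induction step contains an index error that invalidates the displayed chain of inequalities. By the construction of $\delta_\A$, the transition entering $p_i(\bar u_i)$ gives $\bar u_i = \min(\bar u_{i-1} + \bar v_{i-1}, \omega \cdot \bar 1)$, whereas your last equality reads $\min(\bar u_{i-1} + \bar v_{i}, \omega \cdot \bar 1) = \bar u_i$, which is not what the construction provides. Starting from $\bar s_0 = \bar 0 = \bar u_1$, what the induction actually establishes is the \emph{shifted} invariant
\[
\bar s_i \;=\; \bar s_{i-1} + \bar v_i \;\geq\; \bar u_{i} + \bar v_i \;\geq\; \min(\bar u_{i} + \bar v_i, \omega \cdot \bar 1) \;=\; \bar u_{i+1},
\]
i.e., $\bar v_1 + \dotsb + \bar v_i \geq \bar u_{i+1}$, not $\geq \bar u_i$.

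The unshifted inequality you set out to prove is in fact false in general: with $d=1$, a run consisting of a $+1$-transition followed by a $-1$-transition has $\bar u_1 = 0$, $\bar u_2 = 1$, $\bar u_3 = 0$ and $\bar s_2 = 0 < 1 = \bar u_2$. (The lemma statement itself appears to carry the same off-by-one; the inequality that is both provable and sufficient for everything downstream --- non-negativity of all partial sums, hence validity of the $V$-run, and $\lang(\A_V) \subseteq \lang(V)$ --- is the shifted one.) Two smaller points: your base case should be $\bar s_0 = \bar 0 \geq \bar u_1$, which uses that the run starts in the initial state of $\A_V$ (for runs starting elsewhere the claim must be relativized to $\bar u_1$); and the lemma is used precisely in the direction of lifting accepting runs of $\A_V$ to accepting runs of $V$, not in a ``converse'' direction as your closing remark suggests --- the converse containment is handled separately via Corollary~\ref{cor:profile-reachable}.
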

\begin{proof}
This can by shown by induction on $n$. It suffices to replace every transition $p_i(\bar u_i) \trans{a_i} p_{i+1}(\bar u_{i+1})$ of $\A_V$ by a transition $(p_i, a_i, \bar v, p_{i+1}) \in \delta_V$ such that $\bar u_{i+1} = \toProf{\bar u_i+\bar v}$, which exists by construction.
\end{proof}
As a consequence of the previous lemma, if there are two distinct accepting runs for a word $w$ in $\A_V$, then there are also two distinct accepting runs over $w$ in $V$. In other words:
\begin{lemma}
If $V$ is unambiguous then $\A_V$ is unambiguous.
\end{lemma}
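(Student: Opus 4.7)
My plan is to prove the contrapositive: if $\A_V$ is ambiguous then $V$ is ambiguous. The only real tool needed is Lemma~\ref{lem:V-simulates-AV}, applied separately to two witnessing $\A_V$-runs.

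Assume $\A_V$ has two distinct accepting runs $\pi_1 \neq \pi_2$ over the same word $w$. I would apply Lemma~\ref{lem:V-simulates-AV} to each to obtain $V$-runs $\rho_1$ and $\rho_2$. Both read $w$, since in the construction of $\A_V$ each $\A_V$-transition carries the label of a $V$-transition that induces it, and both end in a state whose first component lies in $F_V$ because $F_\A = F_V \times [0,\omega]^d$; hence both $\rho_1$ and $\rho_2$ are accepting in $V$. The remaining task is to argue that $\rho_1 \neq \rho_2$.

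For this I would exploit the transition-by-transition nature of the lift in Lemma~\ref{lem:V-simulates-AV}: each $\A_V$-transition $p(\bar u) \trans{a} q(\bar u')$ is realized by some $V$-transition $(p, a, \bar v, q) \in \delta_V$ with $\toProf{\bar u + \bar v} = \bar u'$. Let $i$ be the first position where $\pi_1$ and $\pi_2$ diverge; their length-$(i-1)$ prefixes coincide, so at step $i$ both runs depart from a common $\A_V$-state $p(\bar u)$ reading the same letter $a$, but land at different $\A_V$-states. Either (a) the target first components differ, in which case the two lifted $V$-transitions go to different $V$-states and are therefore distinct, or (b) only the target profiles differ, in which case the chosen effect vectors $\bar v_1, \bar v_2$ satisfy $\toProf{\bar u + \bar v_1} \neq \toProf{\bar u + \bar v_2}$, so $\bar v_1 \neq \bar v_2$ and again the $V$-transitions differ. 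In either case $\rho_1$ and $\rho_2$ disagree at position $i$, so $V$ is ambiguous.

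I do not anticipate any real obstacle. The only subtle point is that the lift is performed independently on each $\A_V$-run but produces a distinct $V$-transition wherever the $\A_V$-runs use distinct transitions; this is immediate from the purely local description of the lift in Lemma~\ref{lem:V-simulates-AV}.
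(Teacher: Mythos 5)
Your proof is correct and takes essentially the same route as the paper, which disposes of the lemma in one line by observing that Lemma~\ref{lem:V-simulates-AV} lifts two distinct accepting runs of $\A_V$ over $w$ to two distinct accepting runs of $V$ over $w$; you merely make the distinctness argument explicit. (Two trivial cases your first-divergence analysis elides --- one run being a proper prefix of the other, and the diverging transitions reading different letters, both possible because of $\epsilon$-transitions --- are handled the same way, since the lifted $V$-transitions still differ.)
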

%

\begin{lemma}\label{lem:language-equivalence}
$V$ is universal if, and only if, $\A_V$ is universal.
\end{lemma}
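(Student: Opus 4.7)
The plan is to prove both implications separately, with the forward direction $V\text{ universal}\Rightarrow \A_V\text{ universal}$ being the technical one.

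For the easy direction $\A_V\text{ universal}\Rightarrow V\text{ universal}$, I would simply apply Lemma~\ref{lem:V-simulates-AV}. Given a word $w\in\Sigma^*$, the assumed universality of $\A_V$ yields an accepting run over $w$ ending in a state $p(\bar u)\in F_V\times[0,\omega]^d$. The lemma then produces a sequence of transitions in $\delta_V$ with the same states and labels, whose partial sums dominate the $\bar u_i$; in particular the partial sums remain in $\N^d$, so this is a valid run of $V$, and it is accepting because its final state lies in $F_V$.

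For the hard direction, I would take any accepting run $\rho = c_1\trans{a_1}c_2\trans{a_2}\cdots\trans{a_n}c_{n+1}$ of $V$ on a word $w$ (which exists by universality) and define a candidate run of $\A_V$ by $\bar u_1 := \bar 0$ and $\bar u_{i+1} := \toProf{\bar u_i + \bar v_i}$, using the same transition effects $\bar v_i$ as $\rho$. The states are inherited from $\rho$, so the last state lies in $F_V$ and the last configuration lies in $F_\A$; the only thing to verify is that every $\bar u_i + \bar v_i \in \N^d$, so that the transitions actually belong to $\delta_\A$.

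The key invariant I would establish by induction on $i$ is that, writing $M_k[j] := \max_{k'\leq k} c_{k'}[j]$, one has
\[
\bar u_k[j] \;=\; c_k[j] - \max(M_k[j] - \omega,\,0) .
\]
The inductive step splits into the cases where $\bar u_k[j]+\bar v_k[j]$ does or does not exceed $\omega$, and in each case a direct computation verifies the invariant (using that the cap at $\omega$ only fires at new maxima of $c_k[j]$ above $\omega$). From this invariant,
\[
\bar u_k[j] + \bar v_k[j] \;=\; c_{k+1}[j] - \max(M_k[j] - \omega,\,0) ,
\]
which is obviously nonnegative when $M_k[j]\leq\omega$, and in the other case it equals $c_{k+1}[j] - M_k[j] + \omega$, which is strictly positive by Corollary~\ref{cor:profile-reachable}\eqref{cor:it:no-big-decrements} applied to the accepting run $\rho$ (taking $c_1$ to be the earlier configuration witnessing the maximum and $c_2 := c_{k+1}$).

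The main obstacle is precisely this validity check: the naive profile lift does not commute with transitions when a coordinate rises above $\omega$ and then descends, so one might fear that the capped value drifts below zero. The use of Corollary~\ref{cor:profile-reachable}\eqref{cor:it:no-big-decrements}, which bounds the maximal drop along any accepting run by $\omega$, is exactly what is needed to rule this out, and it is the only place where the hypotheses that $V$ is universal and unambiguous (inherited through the corollary) enter the proof.
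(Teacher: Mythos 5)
Your proof is correct and follows essentially the same route as the paper: the easy direction via Lemma~\ref{lem:V-simulates-AV}, and the hard direction by lifting the accepting run of $V$ to the capped profile run and invoking Corollary~\ref{cor:profile-reachable}\eqref{cor:it:no-big-decrements} to rule out a coordinate dropping below zero. Your running-maximum invariant $\bar u_k[j] = c_k[j] - \max(M_k[j]-\omega,0)$ is a welcome precise formulation of the step the paper only sketches informally.
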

\begin{proof}
Observe first that $\lang(\A_V) \subseteq \lang(V)$ by Lemma~\ref{lem:V-simulates-AV}.
Hence, if $\A_V$ is universal, so is $V$.
For the converse direction, suppose $V$ is universal, and let us show that $\A_V$ is universal as well. 
Let $\rho = (q_0,a_1,\bar v_1,q_1) \dotsb (q_{n-1},a_n,\bar v_n,q_n)$ be the accepting run of $w = a_1 \dotsb a_n$ in $V$. Let us consider the run $\rho' = (q_0(\bar x_0),a_1, q_1(\bar x_1)) \dotsb (q_{n-1}(\bar x_{n-1}),a_n, q_n(\bar x_n))$ of $\A_V$, where $\bar x_0 = \bar 0$ and for every $i>0$, $\bar x_{i} = \toProf{\bar x_{i-1} + \bar v_i}$.
We claim that $\rho'$ is an accepting run on $\A_V$. By means of contradiction, if $\rho'$ is not a run, there must be some $q_{i}(\bar x_{i}) \trans{a_{i+1}} q_{i+1}(\bar x_{i+1})$ which is not a transition of $\A_V$. This can only happen if some configuration on $\rho$ reaches some big counter value at a position $j$ which later decreases by at least $\omega$. More concretely, this means that there are, among the configurations reachable through $\rho$, two configurations $c,c'$ such that $c$ appears before $c'$ and for some $j \in [1,k]$ we have $c[j] - c'[j] > \omega$. But this would contradict Corollary~\ref{cor:profile-reachable}-\eqref{cor:it:no-big-decrements}. Hence, $\rho'$ is an accepting run and thus $\A_V$ is universal.
\end{proof}

Notice that the automaton $\A_V$ has a doubly-exponential number of states. As checking its universality is polynomial-time in its size~\cite{DBLP:conf/dcfs/Colcombet15},
which is doubly exponential, the problem is in \twoexptime.
In order to design an \expspace algorithm we need a bit more work.
The following lemma together with Lemma~\ref{lem:language-equivalence} finishes the proof of Proposition~\ref{prop:expspace}.

\begin{lemma}\label{lemma:universality-automaton}
Checking universality of $\A_V$ is in \expspace.
\end{lemma}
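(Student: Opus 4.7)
The plan is to invoke the result of Tzeng~\cite{DBLP:journals/ipl/Tzeng96}, which places universality of any unambiguous finite automaton in uniform \nctwo, that is, a uniform family of Boolean circuits of polynomial size and $O(\log^2 n)$ depth with binary fan-in, where $n$ is the number of states. By the standard depth-first simulation of uniform $\mathrm{NC}^k$ circuits in $\mathrm{DSpace}(\log^k n)$, universality of a UFA with $n$ states is thus decidable in space $O(\log^2 n)$. Since $\A_V$ is a UFA by the preceding lemmas, this algorithm applies to it.

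The budget works out as expected. Unfolding the definitions of $A_{M,d,n}$, $B_{M,d,n}$, and $C_{M,d,n}$, one checks that $\omega = C_{|V|,d,n}$ is at most doubly exponential in $|V|$; hence $|\A_V| \leq |Q_V| \cdot (\omega+1)^d$ is also at most doubly exponential, so $\log|\A_V|$ is singly exponential, and $O(\log^2 |\A_V|)$ space lies comfortably within \expspace.

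The main obstacle is that $\A_V$ has doubly-exponentially many states and cannot be materialised in \expspace; the Tzeng procedure must therefore be run without ever writing $\A_V$ down. The fix is to access $\A_V$ succinctly on-the-fly: a state $q(\bar v) \in Q_V \times [0,\omega]^d$ of $\A_V$ is representable by $O(\log|Q_V| + d\log(\omega+1))$ bits, which is singly exponential in $|V|$ and so fits in our space budget; and checking whether a transition $p(\bar u) \trans{a} q(\bar v)$ belongs to $\delta_\A$ reduces, by construction of $\A_V$, to looking up some $(p,a,\bar w,q) \in \delta_V$ with $\bar u + \bar w \in \N^d$ and $\toProf{\bar u + \bar w} = \bar v$, which takes space polynomial in $|V|$. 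Plugging this succinct oracle for the transition relation of $\A_V$ into the logspace-uniform Tzeng circuit yields the desired \expspace procedure.
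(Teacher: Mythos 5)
Your proposal follows essentially the same route as the paper's proof: invoke Tzeng's \nctwo bound for UFA universality, simulate the \nctwo circuit in deterministic space $O(\log^2 n)$, and observe that since $\A_V$ cannot be written down in \expspace, the polylog-space algorithm must query the (exponential-space computable) description of $\A_V$ on demand, so that the whole composition stays in \expspace. The paper phrases this last step as the closure $\expspace \circ \polylogspace \subseteq \expspace$, with the oracle machine storing the index of the currently requested input symbol; your ``succinct transition oracle'' is the same idea.

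Two small points you gloss over. First, $\A_V$ inherits $\eps$-transitions from $V$ (the model here allows $\delta \subseteq Q \times \Sigma_\eps \times \Z^d \times Q$), and Tzeng's result is invoked by the paper only for UFA without $\eps$-cycles; the paper therefore inserts an \nl preprocessing step that deletes all transitions lying on $\eps$-cycles, which is sound for unambiguous automata because no accepting run can use such a transition (otherwise appending the cycle would give a second accepting run on the same word). Your argument needs the same step, or at least a remark that Tzeng's algorithm extends to automata with acyclic $\eps$-transitions. Second, your claim that membership in $\delta_\A$ can be tested ``in space polynomial in $|V|$'' is off by an exponential: the counter values in states of $\A_V$ are bounded by $\omega$, which is doubly exponential in $|V|$, so their binary encodings --- and hence the arithmetic $\bar u + \bar w$ and the comparison with $\toProf{\bar u+\bar w}$ --- require exponential space. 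This is harmless for the final \expspace bound, which is exactly the budget you already allotted for storing a state, but the accounting should be consistent.
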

\begin{proof}
Notice first that the function $V \mapsto \A_V$ can be easily computed in \expspace.
Indeed, a state of $\A_V$ is described by a pair consisting of a state from $Q_V$ and a vector $\bar v \in [0, \omega]^d$, where $\omega = C_{|V|,d,|Q_V|} = |V| \cdot ( |V| \cdot (4|Q_V|^4 (|V|+1)^2)^{(8d)^{2d-1}}+1)^d$ is doubly exponential with respect to the description size of $V$, and therefore it can be kept in \expspace. It is then possible to iterate through
all the possible pairs in $(Q_V, [0, \omega]^d)$ in \expspace and for every state output the transitions outgoing from this state.

By~\cite{DBLP:journals/ipl/Tzeng96} checking universality of UFA without cycles containing only $\eps$-labelled transitions
($\eps$-cycles) is in \nctwo, namely in the class of languages
recognizable by uniform families of circuits of depth $\O(\log^2(n))$ and binary branching, where $n$ is the number of inputs.
A simple procedure which eliminates all the $\eps$-cycles (\ie, all the transitions involved in $\eps$-cycles) can be designed to be in $\nl$. Observe that eliminating $\eps$-cycles does not change the language of unambiguous automata, since no accepting run can contain a transition from an $\eps$-cycle (such a run extended by the $\eps$-cycle would be also accepting, which would violate the unambiguity assumption).
Since $\nl \subseteq$ \nctwo and \nctwo is closed under composition, we obtain that the universality problem for an arbitrary UFA (possibly with $\eps$-transitions) is in \nctwo as well. 
It is folklore that \nctwo is included in poly-logarithmic space (actually in the deterministic space $\log^2 n$).
Indeed, one can simply simulate a circuit of depth $D$ and binary branching in space $D$.

It is now enough to argue that the composition of \expspace and \polylogspace is included in \expspace.
This result is also folklore, we sketch here a proof. Any algorithm in the composition of \expspace and \polylogspace can be
seen as a \polylogspace algorithm inputting the output of an \expspace machine, potentially of a doubly exponential length.
This doubly exponential output cannot be kept by an \expspace algorithm, but one can simulate the composition by
a \polylogspace algorithm asking \expspace oracles for particular letters of its input.
Such an algorithm in turn can be simulated easily in \expspace. We keep three exponential size pieces of the information:
(i) the space of the oracle, (ii) the index of the doubly exponential input being currently transferred to the oracle,
and (iii) the space of the poly-logarithmic algorithm, which is poly-logarithmic with respect to the doubly exponential input, hence exponential.
Therefore indeed $\expspace \circ \polylogspace \subseteq \expspace$, which finishes the proof.
\end{proof}

Let us now analyze the situation for a fixed dimension $d \in \N$.
The number of states of $\A_V$ equals $|Q_V|$ times $|V| \cdot ( |V| \cdot (4|Q_V|^4 (|V|+1)^2)^{(8d)^{2d-1}}+1)^d$, which for a fixed $d$
is a polynomial depending on $|Q_V|$ and $|V|$. This immediately implies that for $V$ represented in unary the size of $\A_V$
is polynomial, while for $|V|$ represented in binary the size of $\A_V$ is exponential in the size of the input.
A proof almost identical to that of Lemma~\ref{lemma:universality-automaton}, where we substitute \expspace with \pspace, yields the following result.
\begin{proposition}[Theorem~\ref{thm:universality}\eqref{it:d-vass-binary-pspace} upper bound]\label{prop:pspace}
For every fixed $d \in \N$ the universality problem for binary represented, unambiguous $d$-VASS is in \pspace.
\end{proposition}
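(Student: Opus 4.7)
The plan is to replay the proof of Lemma~\ref{lemma:universality-automaton} verbatim, only tracking how the space bounds tighten once $d$ is a constant. The correctness part, namely the construction $V \mapsto \A_V$ and the equivalence ``$V$ universal iff $\A_V$ universal'' proved in Lemmas~\ref{lem:V-simulates-AV} and \ref{lem:language-equivalence}, is already uniform in $d$, so nothing needs to be reproved there. What changes is the size bound on $\A_V$.

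First I would observe that when $d$ is fixed, the quantity
\[
\omega = C_{|V|,d,|Q_V|} = |V| \cdot \bigl(|V| \cdot (4|Q_V|^4(|V|+1)^2)^{(8d)^{2d-1}} + 1\bigr)^d
\]
is a polynomial in $|V|$ and $|Q_V|$ of fixed degree. With $|V|$ encoded in binary, $\omega$ is therefore only singly exponential in the input size, so $|\A_V| = |Q_V| \cdot (\omega+1)^d$ is singly exponential rather than doubly exponential. Consequently the function $V \mapsto \A_V$ can be produced on the fly in \pspace: a state $(q,\bar v) \in Q_V \times [0,\omega]^d$ requires $d \cdot \log(\omega+1) + \log|Q_V|$ bits to store, which is polynomial in the input because $d$ is fixed, and listing the outgoing transitions is a direct syntactic check against $\delta_V$.

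Next I would reuse the same composition argument as in Lemma~\ref{lemma:universality-automaton}. Universality for UFA (after an $\nl$ pre-pass that removes $\eps$-cycles, which is language-preserving for UFAs) lies in \polylogspace by \cite{DBLP:journals/ipl/Tzeng96}. Running a \polylogspace procedure on an input of singly exponential size requires at most $\log^{\O(1)}(2^{\poly(|V|)}) = \poly(|V|)$ space. Formally, I would simulate the composition as in the previous proof: a \pspace machine maintains (i) the workspace of the \polylogspace universality tester for $\A_V$, (ii) the index of the position of $\A_V$'s description currently being queried, and (iii) the polynomial workspace needed to resolve that query directly from $V$ via the on-the-fly construction above. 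All three pieces are polynomial in $|V|$, yielding $\pspace \circ \polylogspace \subseteq \pspace$ in this regime and hence a \pspace universality test.

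No genuine obstacle is expected; the only subtlety is to check that the entire chain of lemmas (Lemma~\ref{lem:short-ambiguity-witness}, Lemma~\ref{lem:profile-reachable-ext}, Corollary~\ref{cor:profile-reachable}) is stated uniformly in $d$ so that the fixed-dimension specialization uses them as black boxes, and that the oracle-simulation used to compose the two space classes survives verbatim when the upper class is downgraded from \expspace to \pspace. Both are straightforward inspections of the earlier arguments.
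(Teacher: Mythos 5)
Your proposal is correct and follows essentially the same route as the paper: the paper likewise notes that for fixed $d$ the number of states of $\A_V$ is polynomial in $|Q_V|$ and the norm $|V|$, hence singly exponential under binary encoding, and then reruns the proof of Lemma~\ref{lemma:universality-automaton} with \expspace replaced by \pspace. Your on-the-fly construction and the $\pspace \circ \polylogspace \subseteq \pspace$ composition are exactly the intended adaptation.
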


In a similar way we solve the case of unary represented $d$-VASSes.
In this case, we replace \expspace with the class of problems solvable in logarithmic space \logspace.
We also use the fact that \logspace composed with \nctwo is included in \nctwo, which is immediately implied by a trivial
closure of \nctwo by composition and inclusion \logspace $\subseteq$ \nctwo. Then we get the following.

\begin{proposition}[Theorem~\ref{thm:universality}\eqref{it:d-vass-unary-nctwo} upper bound]\label{prop:nctwo}
For every fixed $d \in \N$ the universality problem for unary represented, unambiguous $d$-VASS is in \nctwo.
\end{proposition}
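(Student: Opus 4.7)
The plan is to follow essentially the same blueprint as in Lemma~\ref{lemma:universality-automaton} and Proposition~\ref{prop:pspace}, but with the resource bounds tightened to reflect that, under unary encoding and fixed $d$, the automaton $\A_V$ built in the proof of Lemma~\ref{lem:language-equivalence} is of \emph{polynomial} rather than exponential size. Concretely, for fixed $d$ the quantity $\omega = C_{|V|,d,|Q_V|}$ is a fixed polynomial in $|V|$ and $|Q_V|$, so when $|V|$ is given in unary the state space $Q_V \times [0,\omega]^d$ of $\A_V$ has polynomially many elements in the input size, and its transition relation has polynomial size as well.

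First, I would argue that $V \mapsto \A_V$ is computable in \logspace under this encoding. A state of $\A_V$ is a pair $(q, \bar u)$ with $q \in Q_V$ and $\bar u \in [0,\omega]^d$, and each coordinate of $\bar u$ can be stored in $O(\log |V|)$ bits since $\omega$ is polynomial in the input; similarly each transition $p(\bar u) \trans{a} q(\toProf{\bar u + \bar v})$ can be verified by iterating over the transitions $(p,a,\bar v,q)$ of $V$ and performing the truncation $\toProf{\cdot}$, both of which are easy arithmetic manipulations over logarithmically-sized integers. Thus one can enumerate states and transitions of $\A_V$ on the fly in \logspace. By Lemma~\ref{lem:language-equivalence} and the lemma preceding it, $\A_V$ is an unambiguous finite automaton that is universal if and only if $V$ is.

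Second, I would invoke the \nctwo algorithm for universality of UFA, as in Lemma~\ref{lemma:universality-automaton}: by~\cite{DBLP:journals/ipl/Tzeng96} universality of UFA without $\eps$-cycles is in \nctwo, and $\eps$-cycles can be removed in \nlogspace without affecting the language of an unambiguous automaton; since \nlogspace $\subseteq$ \nctwo and \nctwo is closed under composition, universality of arbitrary UFA is in \nctwo.

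Finally, I would combine the two: the overall algorithm is the composition of a \logspace construction of $\A_V$ with an \nctwo universality test on $\A_V$. Using the folklore fact that $\logspace \subseteq \nctwo$ together with closure of \nctwo under composition, the whole procedure lies in \nctwo, yielding the proposition. The only subtle point, which is really just a bookkeeping issue rather than a genuine obstacle, is to verify that the polynomial bound on $\omega$ (and hence on $|\A_V|$) is indeed uniform in $|V|$ for fixed $d$, and that the composition of \logspace with \nctwo can be carried out without blowing up the depth beyond $O(\log^2 n)$; both follow from the standard arguments sketched in Lemma~\ref{lemma:universality-automaton}.
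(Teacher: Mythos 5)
Your proposal is correct and follows essentially the same route as the paper: for fixed $d$ and unary encoding the profile bound $\omega$ is polynomial, so $\A_V$ is polynomial-size and constructible in \logspace, after which the \nctwo universality test for UFA and the closure of \nctwo under composition with \logspace give the result. This matches the paper's own (very brief) argument for Proposition~\ref{prop:nctwo}.
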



\subsection{Lower bounds}

\begin{proposition}[Theorem~\ref{thm:universality}\eqref{it:vass-expspace} lower bound]\label{prop:univ-expspace-hard}
	The universality problem for unambiguous VASS is \expspace-hard, even on a one-letter alphabet.
\end{proposition}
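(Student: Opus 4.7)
The plan is to reduce from non-emptiness of unambiguous $\epsilon$-VASS, which is \expspace-complete as an immediate consequence of \cref{lem:epsVASS-emptiness} (since \expspace is closed under complement). Given such an unambiguous $\epsilon$-VASS $V = (\Sigma, d, Q, q_0, \delta, F)$, I would construct in polynomial time an unambiguous VASS $V'$ on the singleton alphabet $\set{a}$ such that $V'$ is universal if, and only if, $\lang(V) \neq \emptyset$.

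The construction is to take $V$ and add a fresh state $q'_f$, with zero-effect $\epsilon$-transitions into it from every old accepting state, and with a single zero-effect $a$-labelled self-loop on $q'_f$; then declare $\set{q'_f}$ to be the new set of final states. So $V'$ extends $V$ with a ``sink'' accepting state that can only be reached via the old final states and whose only outgoing transition is its $a$-self-loop.

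To verify the two key properties, observe first that any accepting run of $V'$ on $a^n$ must, by inspection of the transitions into and out of $q'_f$, decompose as $\pi \cdot t \cdot \sigma^n$, where $\pi$ is a run of $V$ from $q_0$ to some $q \in F$, $t$ is the $\epsilon$-transition from $q$ to $q'_f$, and $\sigma$ is the $a$-self-loop at $q'_f$. Since $V$ is unambiguous and all its transitions read $\epsilon$, there is at most one such $\pi$, and hence at most one accepting run of $V'$ on each $a^n$; so $V'$ is unambiguous. Conversely, if $\pi$ exists then every $a^n$ is accepted by pumping the self-loop, while if $\pi$ does not exist then $q'_f$ is unreachable and $\lang(V') = \emptyset$; this gives universality of $V'$ iff non-emptiness of $V$.

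I do not anticipate any real obstacle: the only point that requires minimal care is that adding the $\epsilon$-transitions from each final state of $V$ into the new state $q'_f$ does not create spurious ambiguity, which is immediate because all accepting runs of $V'$ must terminate at $q'_f$, so any two distinct accepting runs of $V'$ on some $a^n$ would, after stripping the final block of $a$-self-loops and the unique $\epsilon$-transition into $q'_f$, project to two distinct accepting runs of $V$ on $\epsilon$.
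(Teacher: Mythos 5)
Your proposal is correct and follows essentially the same route as the paper: both reduce from (non-)emptiness of an unambiguous $\eps$-VASS via Lemma~\ref{lem:epsVASS-emptiness} by attaching a fresh accepting sink state, reachable only from the old final states, that carries a zero-effect $a$-labelled self-loop, and both verify unambiguity by noting that any accepting run must factor through the unique accepting run of the original $\eps$-VASS. The only (immaterial) differences are that the paper's connecting transitions read $a$ rather than $\eps$ and that it keeps the old final states accepting so that $\eps$ is handled directly, whereas you make the sink the sole accepting state and reach it by an $\eps$-transition.
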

\begin{proof}
We reduce from the problem of whether an unambiguous $\epsilon$-VASS has an empty language, which is \expspace-hard as observed in Lemma~\ref{lem:epsVASS-emptiness}.
Given an unambiguous $\epsilon$-VASS $\A=(\set a,d,Q,q_0,\delta,F)$, we build an unambiguous VASS $\B$ on a one-letter alphabet $\set a$ such that $\lang(\B) = a^*$ if $\lang(\A)=\set{\epsilon}$ and $\lang(\B) = \emptyset$ otherwise. $\B$ is the result of adding a new final state $q_f$ to $\A$, and transitions $(q,a,\bar 0, q_f)$ for every $q \in F \cup \set{q_f}$.
\end{proof}

\begin{corollary}
The co-finiteness problem for unambiguous VASS, that is, whether the complement of its language is finite, is \expspace-hard.
\end{corollary}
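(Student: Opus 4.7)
The plan is to observe that the very same reduction used in the proof of Proposition~\ref{prop:univ-expspace-hard} already witnesses \expspace-hardness of co-finiteness, without any modification. Concretely, I would recall the construction: starting from an unambiguous $\epsilon$-VASS $\A$, we built an unambiguous VASS $\B$ over the one-letter alphabet $\set{a}$ such that $\lang(\B) = a^*$ if $\lang(\A) = \set\eps$, and $\lang(\B) = \emptyset$ otherwise.

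The key remark is then that on the alphabet $\set a$, the two possible outcomes for $\lang(\B)$ behave oppositely with respect to co-finiteness: the complement of $a^*$ is $\emptyset$ (finite), whereas the complement of $\emptyset$ is $a^*$ (infinite). Hence $\lang(\B)$ is co-finite if, and only if, $\lang(\A) = \set\eps$, and equivalently if, and only if, $\B$ is universal. Since the latter problem is \expspace-hard by Lemma~\ref{lem:epsVASS-emptiness} and Proposition~\ref{prop:univ-expspace-hard}, so is co-finiteness for unambiguous VASS.

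I do not expect any real obstacle here; the whole content of the corollary is the observation that the reduction in Proposition~\ref{prop:univ-expspace-hard} is dichotomic, producing languages lying in $\set{\emptyset, a^*}$, on which universality and co-finiteness coincide. The only thing worth double-checking is that $\B$ remains unambiguous in both outcomes, which is inherited directly from the unambiguity of $\A$ and the fact that the single added transition from each final state to the new final state $q_f$ is deterministic.
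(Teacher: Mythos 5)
Your proposal is correct and coincides with the paper's (implicit) argument: the corollary is stated without proof precisely because the reduction of Proposition~\ref{prop:univ-expspace-hard} yields a language in $\{\emptyset, a^*\}$, where co-finiteness and universality coincide. Nothing further is needed.
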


We leave open the question of whether the lower bound of Proposition~\ref{prop:univ-expspace-hard} still holds for unambiguous VASS without epsilon transitions.

The following proposition proves the lower bound of Theorem~\ref{thm:universality}\eqref{it:d-vass-binary-pspace}.
\begin{proposition}[Theorem~\ref{thm:universality}\eqref{it:d-vass-binary-pspace} lower bound]\label{prop:pspace-hard-univ}
	The universality problem for unambiguous 2-VASS is \pspace-hard.
\end{proposition}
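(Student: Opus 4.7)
The plan is to follow the two-stage reduction strategy of Proposition~\ref{prop:univ-expspace-hard}, scaled down from \expspace to \pspace. In the first stage we aim to show that emptiness of unambiguous (in fact, deterministic) $\epsilon$-2-VASS under binary encoding is \pspace-hard; in the second stage we transfer this hardness to universality of unambiguous 2-VASS via the exact same construction used in Proposition~\ref{prop:univ-expspace-hard}.

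For the first stage, we reduce from acceptance of an input $x$ by a deterministic polynomial-space Turing machine $M$, a canonical \pspace-complete problem. The goal is to build a deterministic, hence unambiguous, 2-VASS $\A$ with $\epsilon$-transitions such that $\lang(\A) = \{\epsilon\}$ iff $M$ accepts $x$. The tape of $M$, of length $p(|x|)$ for some polynomial $p$, is stored across the two counters in base $|\Gamma|$ (where $\Gamma$ is $M$'s tape alphabet): one counter holds the integer encoding of the cells to the left of the head, the other holds those to the right, while the current state and the symbol under the head are tracked in the finite control of $\A$. Each TM step is simulated by a gadget of $\epsilon$-transitions that performs the appropriate arithmetic on the counters; for example, a right move pushes the old head symbol onto the left counter (multiply by $|\Gamma|$, then add the symbol) and pops the new head symbol off the right counter (repeated subtraction of $|\Gamma|$ down below the $|\Gamma|$ threshold). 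Counter values stay bounded by $|\Gamma|^{p(|x|)}$, which fits comfortably into the binary encoding. The determinism of $M$ translates directly into determinism, and hence unambiguity, of $\A$.

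Given such an $\A$, we instantiate the reduction of Proposition~\ref{prop:univ-expspace-hard}: add a fresh final state $q_f$ to $\A$ and, for every $q \in F_\A \cup \{q_f\}$, an $a$-transition $(q,a,\bar 0,q_f)$, yielding an unambiguous 2-VASS $\B$ over the singleton alphabet $\{a\}$. Exactly as in that proof, $\lang(\B) = \{a\}^*$ iff $\lang(\A) = \{\epsilon\}$ iff $M$ accepts $x$, giving \pspace-hardness of universality.

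The hard part will be implementing the arithmetic gadgets in the first stage. With only two counters and no auxiliary scratch counter, operations such as multiplication or integer division by the constant $|\Gamma|$ require careful staging: one counter must be temporarily freed while the other is being manipulated, routing only constant-sized pieces of information through the finite control. Since $M$ is deterministic, each gadget has a unique legal execution and no real nondeterminism is introduced, so the resulting 2-VASS remains unambiguous by construction. Once this delicate simulation is in place, the rest of the argument is routine and follows the template of Proposition~\ref{prop:univ-expspace-hard}.
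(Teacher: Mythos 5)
There is a genuine gap, and it is concentrated exactly in the step you defer as ``the hard part'': stage one does not work. A 2-VASS has no zero tests, so your arithmetic gadgets cannot be implemented deterministically. Computing $c \bmod |\Gamma|$ and $c \mathbin{\mathrm{div}} |\Gamma|$ by ``repeated subtraction down below the $|\Gamma|$ threshold'' requires the machine to \emph{detect} when the counter has dropped below $|\Gamma|$; a deterministic VASS can only keep subtracting until the counter would go negative, at which point the run is dead rather than branching into the ``done'' case. The standard repair is to guess the branch and then verify the guess, but verifying ``this counter is now small'' needs an upper bound witnessed by a complementary counter (as in Lipton's construction, or in the $c \mapsto (c, N-c)$ encoding of bounded counters), and your two counters are already occupied by the two tape halves, whose sum is not invariant under head moves. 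An unverifiable wrong guess yields spurious accepting runs, destroying both correctness and unambiguity. This is not an implementation detail: it is precisely the obstruction that forces Lipton-style constructions to use many counters, and there is no known \pspace-hardness result for emptiness of deterministic (or unambiguous) $\epsilon$-2-VASS from which your stage two could start.

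The paper sidesteps this entirely. It reduces from reachability in bounded one-counter automata (\pspace-hard by Fearnley and Jurdzi\'nski), encodes the $N$-bounded counter value $c$ as the pair $(c, N-c)$ so that the bound is enforced by non-negativity of the second coordinate, and --- crucially --- takes the \emph{alphabet} of the constructed 2-VASS to be the transition set of the one-counter automaton, so that each input word determines the simulated run and unambiguity is obtained without needing a deterministic $\epsilon$-machine. It then goes directly to universality rather than via emptiness: the language of $\B$ is the set of words that do \emph{not} encode an $N$-bounded run from $q(0)$ to $p(m)$, and the delicate part is adding escape transitions to an accepting sink $q_f$, with effects calibrated against the invariant $u_0 + u_0' = N$, so that every non-witnessing word is accepted by exactly one run while no witnessing word escapes the non-accepting sink $\bot$. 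If you want to rescue your two-stage plan you would first have to establish \pspace-hardness of emptiness for unambiguous $\epsilon$-2-VASS, which is a different and, as far as the paper indicates, unresolved problem.
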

\begin{proof}
We reduce from the bounded one-counter automata reachability problem, which is known to be \pspace-hard \cite[Corollary~10]{FearnleyJ15}. This problem can be stated as follows: given  a 1-VASS $\A = (\Sigma,1,Q_\A,q,\delta_\A,F)$, a number $N \in \N$ encoded in binary, and a configuration $p(m)$, is there a run $(r_1,\alpha_1,u_1, r'_1) \dotsb (r_n,\alpha_n,u_n, r'_n)$ from $q(0)$ to $p(m)$ such that $\sum_{i \leq j} u_i \leq N$ for every $1 \leq j \leq n$?
The alphabet is not important for this problem, we can consider that every transition reads the letter $a$.

Let $\A$, $N$, $p(m)$ be the input of the aforementioned problem. We now construct, in polynomial time, an unambiguous 2-VASS $\B = (\Sigma,2,Q,q_0,\delta,F)$, such that it is universal if, and only if, the answer to the input is negative --- the statement then follows by closure under complement of \pspace.
Concretely, the language of $\B$ is essentially the set of all sequences of transitions in $(\delta_\A)^*$ which \emph{do not} contain a run from $q(0)$ to $p(m)$ as a prefix.
Intuitively, the construction of $\B$ from $\A$ can be divided into two steps. First we change the $N$-bounded 1-VASS into a 2-VASS by simulating configuration $q(i)$ by $q(i, N-i)$.
However, this 2-VASS might be far from being universal. Therefore, we add to it a lot of transitions such that it is almost universal: the only way for a word not to be accepted is to reach a configuration corresponding to $p(m)$.

The construction of $\B$ is as follows.
The alphabet $\Sigma$ is defined as $\delta_{\A} \dcup \set\star$; the state set $Q$ is defined as $Q_\A \dcup \set{\bot,q_f,q_0}$; and the set of final states is $F = Q \setminus \set\bot$, where $\bot$ is a \emph{sink state}.
$\B$ will always keep the invariant that the sum of its two components is equal to $N$ on all configurations with state in $Q_\A$ reachable from the initial configuration $q_0(0,0)$. Further, the transition graph is as in $\delta_\A$ but labels are used to enforce unambiguity. This is done by initializing the vector in $(0,N)$ as the first thing the automaton does (by adding a new initial state $q_0$
and transition $(q_0,t,(0,N),q)$ from it to the initial state of $\A$), and additionally translating every transition $t=(r,a,h,r') \in \delta_\A$ into $(r,t,(h,-h),r')$.
Now we need to assure that the only way to be not accepted is to reach configuration $p(m, N-m)$.
For that purpose we add a special transition reading $\star$ with effect $(-m,m-N)$ and going from $p$ to the sink state $\bot$.
All the other sequences of transitions need to be made accepting.
For that we add an extra accepting state $q_f$ and a lot of transitions leading to it.
Concretely, $\B$ has these transitions:
\begin{enumerate}[(i)]
  \item the initial transition $(q_0,t,(0,N),q)$ for every $t \in \Sigma$;
  \item \label{it:simulatingtransition} a `simulating' transition $(r,t,(h,-h),r')$ for every $t=(r,a,h,r') \in \delta_\A$;
  \item a transition from $p$ to $\bot$ reading $\star$ with effect $(-m, m - N)$;
  \item a transition from every $r \in Q_\A \setminus \set p$ to $q_f$ reading $\star$ with effect $(0,0)$;
  \item two transitions from $p$ to $q_f$ reading $\star$, one with effect $(-m-1,0)$ and one with effect $(0,-(N-m)-1)$;
  \item a transition from every $r \in Q_\A$ to $q_f$ reading $t \in \Sigma \setminus \set \star$ with effect $(0,0)$ if the $t$-labelled transition is not outgoing from $r$;
  \item \label{it:specialtransition} a transition from every $r \in Q_\A$ to $q_f$ reading $t \in \Sigma \setminus \set \star$ with effect $(0,-(N+\ell)-1)$ if the $t$-labelled transition is outgoing from $r$ and has effect $\ell < 0$;
  \item $\bar 0$-effect self-loops on $q_f$, with all possible letters of $\Sigma$.
\end{enumerate}
Figure \ref{fig:pspace-hard} contains a depiction of the construction.
\begin{figure}
	\centering
		\includegraphics[width=\textwidth]{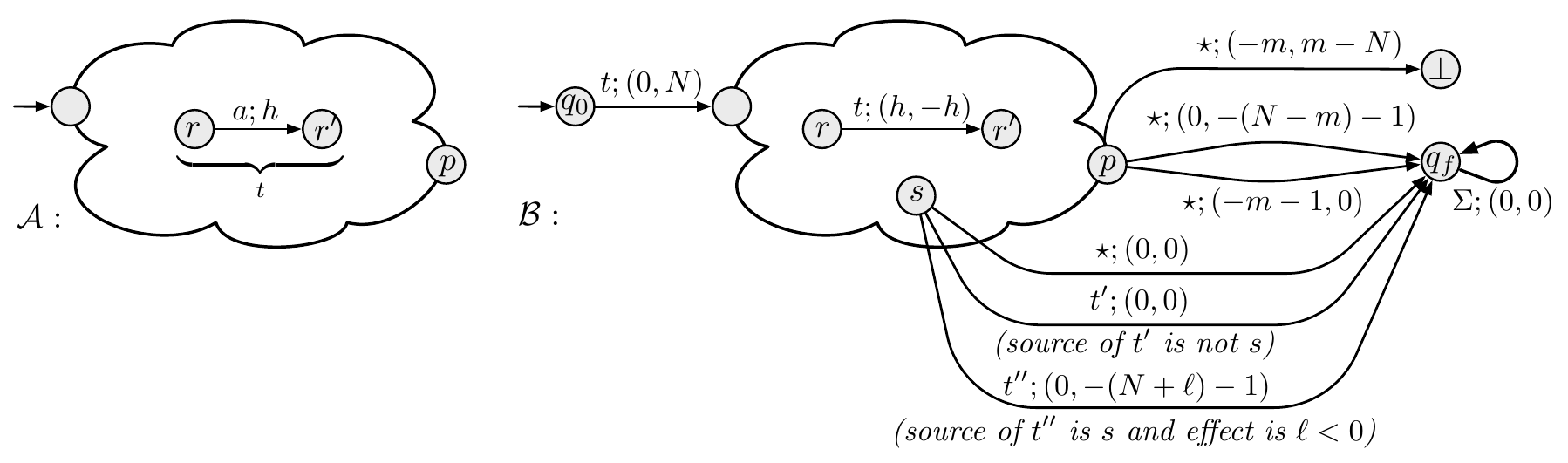}
	\caption{Definition of $\B$. An arrow labelled ``$\alpha;\bar x$'' denotes a transition reading $\alpha$ with effect $\bar x$.}
	\label{fig:pspace-hard}
\end{figure}
We now show the correctness of the reduction.
Observe first that, by construction, all configurations $c$ of $\B$ reachable from $q_0(0,0)$ are $N$-bounded. Further, if the state of $c$ is from $Q_\A$, then the sum of its components is equal to $N$.
	
		\smallskip
	
		We show that $\B$ is unambiguous. What is more, we will show that for every configuration $r(\bar u_0)$ reachable from $q_0(0,0)$ and for every letter $a \in \Sigma$ there is at most one outgoing transition from $r$ reading $a$ that can be applied to $r(u_0,u'_0)$. By means of contradiction, suppose that there are two distinct transitions $(r,a,(u_1,u'_1),r_1),(r,a,(u_2,u'_2),r_2) \in \delta$ such that $(u_0,u'_0)+(u_1,u'_1) \in \N^2$ and $(u_0,u'_0) + (u_2,u'_2) \in \N^2$. By construction, the only possibility is that one transition is a simulating transition as defined in \eqref{it:simulatingtransition}, and the other transition is as defined in \eqref{it:specialtransition}. In particular, $a$ must be a transition from $\delta_\A$, $r,r_1$ are states from $Q_\A$, and $r_2 = q_f$. By the above observation, $u_0+u'_0 = N$, and by construction (item \eqref{it:specialtransition}), $u_1<0$ and $u'_2 = -(N + u_1) -1$.
Since $u'_0 + u'_2 \geq 0$ by the hypothesis $(u_0,u'_0) + (u_2,u'_2) \in \N^2$, we can replace $u'_2$ with the equality $u'_2 = -(N + u_1) -1$ just observed, and we obtain $u'_0 -(N + u_1) -1 \geq 0$. Since we also know that $u_0+u'_0=N$ by the observation above, we can further replace $u'_0$ with $N-u_0$ in $u'_0 -(N + u_1) -1 \geq 0$, and we obtain $u_0 + u_1 <0$. Note that this contradicts the hypothesis $(u_0,u'_0)+(u_1,u'_1) \in \N^2$. The contradiction comes from assuming that both transitions were possible to trigger.
		\smallskip
	
		We finally show that $\B$ is universal if, and only if, there is no $N$-bounded run from $q(0)$ to $p(m)$ in $\A$. 
	Observe first that for every word $w \in \Sigma^*$ there is exactly one run of $\B$ reading $w$.
		If there is an $N$-bounded run $\rho$ from $q(0)$ to $p(m)$ in $\A$, it follows that the run of $\B$ reading $\rho \star$ ends in the sink state $\bot$, and thus $\rho \star \not \in \lang(\B)$, witnessing the fact that $\B$ is not universal.
		If, on the other hand, there is a run of $\B$ ending in state $\bot$, it must be reading a word of the form $\rho \star$ where $\rho$ is an $N$-bounded run from $q(0)$ to $p(m)$ in $\A$. Since $\bot$ is the sole state which is not accepting and since, as observed before, for all words there is a run, it follows that if $\B$ is universal, then there is no $N$-bounded run in from $q(0)$ to $p(m)$ in $\A$.
\end{proof}

Finally, we show \conp-hardness for universality of one counter nets.
\begin{proposition}[Theorem~\ref{thm:universality}\eqref{it:ocn-binary-conp}]\label{prop:coNP-hardness}
The universality problem for unambiguous 1-VASS is \conp-hard.
\end{proposition}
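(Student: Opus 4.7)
My plan is to reduce from the \np-complete SUBSET-SUM problem to non-universality of unambiguous $1$-VASS. Given positive integers $a_1,\ldots,a_n$ and target $T$ in binary, I would construct in polynomial time an unambiguous $1$-VASS $\B$ over the alphabet $\Sigma = \{\natural, 0, 1, \star\}$ such that $\B$ is universal if, and only if, there is no $S \subseteq [1,n]$ with $\sum_{i \in S} a_i = T$. The only potential witnesses of non-universality would be the words $\natural c_1 \cdots c_n \star$ where $(c_1,\ldots,c_n) \in \{0,1\}^n$ encodes such a subset.

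The heart of the construction consists of two sub-machines running in parallel, branching non-deterministically from the initial state $q_I$ on the first letter $\natural$. Sub~A (states $s_0,\ldots,s_n$) starts with counter $0$ and on reading $c_i = 1$ at position $i$ increments the counter by $a_i$. Sub~B (states $u_0,\ldots,u_n$) starts with counter $T$, via the initial transition of effect $+T$, and on $c_i = 1$ decrements by $a_i$, getting stuck if the counter would go negative. On the final $\star$, Sub~A attempts a decrement of $T+1$ to reach an accepting state $q_A$ (succeeds iff the running sum is $\geq T+1$), and Sub~B attempts a decrement of $1$ to reach an accepting state $q_B$ (succeeds iff $T -$ sum $\geq 1$). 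Since the $a_i$ are positive, partial sums are monotone, so Sub~B is stuck precisely when the full sum exceeds $T$. Thus, for a valid-format word, exactly one of the two sub-machines reaches its accepting state unless sum $= T$, in which case both fail and the word is rejected.

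For all other words I would route acceptance through a single accepting trap state $q_T$ with self-loops on every letter. Crucially, $q_T$ is reachable only via Sub~A on ``unexpected'' transitions (for example, from $s_i$ on $\star$ when $i < n$, or on $\natural$ at any $s_i$), while Sub~B simply gets stuck on any deviation from the expected format, so that no word is accepted twice. To handle words of the form $\natural c_1 \cdots c_n \star x$ with extra symbols $x$ beyond a valid prefix, I would split the $\star$-transition from $s_n$ into two branches: one with effect $-T-1$ going to $q_A$, which has no outgoing transitions (so it accepts only when the word ends at $\star$), and one with effect $0$ to a non-accepting intermediate state $s_{\mathrm{post}}$ whose any-symbol transitions route to $q_T$ (so it accepts only if more letters follow). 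These two branches are thus mutually exclusive in producing an accepting continuation. In addition I would make every $s_i$ accepting and $q_I$ accepting, to absorb proper prefixes of the valid format and the empty word, without breaking unambiguity.

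The main obstacle I anticipate is the careful case analysis required to verify unambiguity across all word shapes: valid-format words with sum above, equal to, or below $T$; proper prefixes; garbage inputs at each state; and arbitrary extensions beyond $\star$. The guiding principle is that Sub~A and Sub~B contribute accepting runs on disjoint sets of words (Sub~A covers sum $\geq T+1$ on valid format together with all garbage; Sub~B covers valid-format words with sum $\leq T-1$), and within each sub-machine the accepting run, if any, is uniquely determined by the shape of the input. Once unambiguity is established, the equivalence of universality of $\B$ with unsolvability of the SUBSET-SUM instance is immediate from the dichotomy at $\star$.
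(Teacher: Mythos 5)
Your construction is correct and follows essentially the same strategy as the paper's: two non-deterministically chosen parallel branches that track complementary quantities and end with threshold subtractions, so that exactly one branch accepts unless the encoded sum equals the target, in which case neither does. The only differences are cosmetic --- you reduce from \textsc{Subset Sum} rather than \textsc{Perfect Partition}, your branches are asymmetric (one accumulating from $0$, one depleting from $T$, possibly dying mid-word) where the paper's are mirrored around $2N$ and both always survive to the final test, and you absorb malformed words via a trap state rather than a separate length-checking component.
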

\begin{proof}
We equivalently will show that non-universality problem for unambiguous 1-VASSes is \np-hard.
The reduction is from the \partition problem. In the \partition problem we are given a finite set of natural numbers $S = \{n_1, \ldots, n_k\} \subseteq_\fin \N$
and we are supposed to answer whether the set of indices $[1,k]$ can be partitioned into two subsets $I_1, I_2 \subseteq [1,k]$ such that
$\sum_{i \in I_1} n_i = \sum_{i \in I_2} n_i$. Such a partition is called a \emph{perfect partition}.
All the numbers are binary represented. The \partition problem is known to be \np-hard~\cite{DBLP:conf/coco/Karp72}.

For an instance of a \partition problem $S \subseteq_\fin \N$ we build an unambiguous 1-VASS $V_S$ such that perfect partition for $S$ exists
if and only if the 1-VASS is not universal. Let $\sum_{i \in [1,k]} n_i = N$, note that the perfect partition exists iff there is a set of indices $I$ such that $\sum_{i \in I} n_i = N / 2$.
Every word of length $k$ encodes a natural number $S_w$ in the following way: for $w \in \{0, 1\}^k$ we define $S_w = \sum_{i \mid w[i]=1} n_i$.
We will design $V_S$ in such a way that $\lang(V_S) \subseteq \{0, 1\}^*$ will always contain all the words of length different than $k$. 
Among words of length $k$ language $\lang(V_S)$ will contain exactly these for which $S_w \neq N / 2$.
Then indeed $\lang(V_S)$ would be not universal iff set $S$ has a perfect partition.

The 1-VASS $V_S$ is defined in Figure~\ref{fig:conp-vass}. 
\begin{figure}
	\centering
		\includegraphics[width=.95\textwidth]{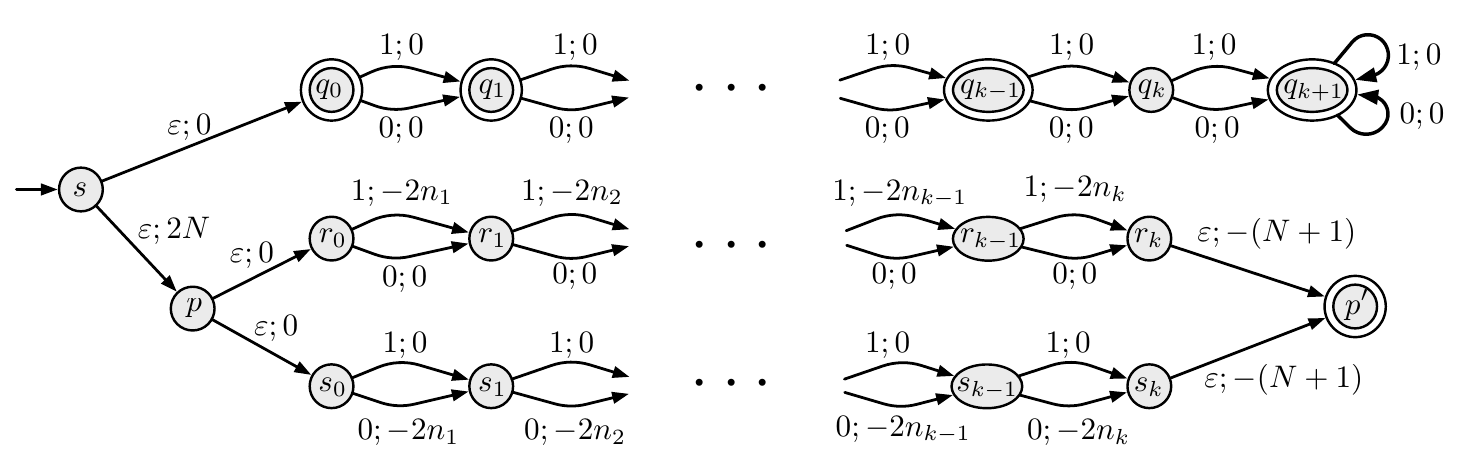}
	\caption{Definition of $V_S$. An arrow labelled ``$i;\ell$'' denotes a transition reading $i$ with effect $\ell$. Double circled states are final.}
	\label{fig:conp-vass}
\end{figure}
It consists of two parts: the top part, with states $q_0$ to $q_{k+1}$ accepts all the words of length different than $|S| = k$, while the bottom part accepts some words of length $k$.

It is immediate to see that the top part accepts all words of length different to $k$, and all of them by exactly one run.

The bottom part consists of states: $p$, $p'$, $r_0, r_1, \ldots, r_k$ and $s_0, s_1, \ldots, s_k$, where  only the state $p'$ is accepting.
Notice that transitions in states $r_i$ are mirrored with respect to transitions in states $s_i$, namely effect of a transition over some letter from $r_i$
equals the effect of the transition over the other letter in $s_i$.
Let us inspect now how an accepting run over $w \in \{0, 1\}^k$ can look like.
Every such run starts from $q_0(0)$ and then goes to $p(2N)$.
Then it splits into two runs, to $r_0(2N)$ and $s_0(2N)$ and from this moment on there are two runs: one in some state $r_i$ and the other in the corresponding state $s_i$.
Then after reading the whole $w$ the two runs are in configurations $r_k(2N - 2S_w)$ and $s_k(2N - 2(N-S_w)) = s_k(2S_w)$.
Notice that $0 \leq S_w \leq N$, so both configurations are indeed always reachable. 
Now comes the last transition from either $r_k$ or $s_k$ to $p'$. Observe that if $S_w \neq N / 2$ then exactly one of them
can be fired. Indeed if $S_w \neq N / 2$ so $2S_w \neq N$ then exactly one of the numbers $2S_w$ and $2N - 2S_w$ equals at least $N+1$.
Then from exactly one of the configurations $r_k(2N - 2S_w)$ and $s_k(2S_w)$ counter value $N+1$ can be subtracted and the run over the word
$w$ will reach an accepting configuration $p'(c)$ for some $c \geq 0$. Then we have $w \in \lang(V_S)$ and exactly one accepting run over $w$.
On the other hand assume now that $S_w = N / 2$. Then the two reached configurations
are $r_k(N)$ and $s_k(N)$. In none of them counter value $N+1$ can be subtracted, which means that in that case no accepting run
over $w$ exists and $w \not\in \lang(V_S)$. Therefore indeed $V_S$ is unambiguous
and importantly $\lang(V_S)$ is not universal iff there exists a perfect partition for $S$.
This finishes the proof.
\end{proof}

\begin{proposition}[Theorem~\ref{thm:universality}\eqref{it:d-vass-unary-nctwo} lower bound]\label{prop:nl-hard-univ}
The universality problem for $d$-VASS with unary encoding is \nlogspace-hard, for every $d \geq 1$.
\end{proposition}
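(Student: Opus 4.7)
The plan is to reduce in logarithmic space from the universality problem for complete deterministic finite automata, which is well-known to be $\nlogspace$-complete. Recall that non-universality of a complete DFA with state set $Q$ and final states $F$ is equivalent to reachability, from the initial state, of some state in $Q \setminus F$, i.e.\ graph reachability (an $\nlogspace$-complete problem); hence universality of DFAs is $\nlogspace$-complete by Immerman--Szelepcs\'enyi.

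Given a complete DFA $A = (\Sigma, Q, q_0, \delta, F)$, I would construct a $d$-VASS $V_A = (\Sigma, d, Q, q_0, \delta', F)$ over the same alphabet, state set, initial and final states, whose transitions are
\[
\delta' = \{(q, a, \bar 0, q') : (q, a, q') \in \delta\},
\]
i.e., every transition has zero effect on every coordinate. Since all effects are $0$, the reduction produces an instance of size polynomial in $|A|$ regardless of whether the encoding of numbers is unary or binary, and is clearly computable in $\logspace$.

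Correctness is immediate. All counters remain at $0$ along any run of $V_A$, so every transition is always enabled; combined with the determinism of $A$, this means that for every word $w \in \Sigma^*$ there is exactly one run of $V_A$ on $w$, and it reaches an accepting state iff $w \in \lang(A)$. Thus $V_A$ is unambiguous (in fact deterministic) and $\lang(V_A) = \lang(A)$, so $V_A$ is universal iff $A$ is universal.

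There is no real obstacle here: the lower bound is a direct consequence of the fact that complete DFAs form a subclass of unambiguous $d$-VASS (with dummy counters), together with the standard $\nlogspace$-hardness of DFA universality. The reduction works uniformly for every fixed $d \geq 1$, establishing the claim.
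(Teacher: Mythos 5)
Your proof is correct and takes essentially the same approach as the paper, which simply notes that the lower bound ``already holds for UFA'': finite automata are a special case of unambiguous $d$-VASS once all transition effects are set to $\bar 0$. You merely spell out the folklore reduction (via \nlogspace-completeness of DFA universality) that the paper leaves implicit.
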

\begin{proof}
	This already holds for UFA.
\end{proof}

\section{Testing for Unambiguity}
\label{sec:unambiguity}
Here we will prove Theorem~\ref{thm:testing-unambiguity}. As we will see, upper bounds follow from the emptiness problem and lower bounds from adaptations of the reductions from the previous section.


\subsection{Upper bounds}
We will next prove the upper bound of Theorem~\ref{thm:testing-unambiguity}\eqref{it:unambcheck:vass-expspace},
\eqref{it:unambcheck:d-vass-logspace} and \eqref{it:unambcheck:d-vass-bin-pspace} namely:

\begin{proposition}[Theorem~\ref{thm:testing-unambiguity}\eqref{it:unambcheck:vass-expspace},
\eqref{it:unambcheck:d-vass-logspace} and \eqref{it:unambcheck:d-vass-bin-pspace} upper bound]\label{prop:unamb-check}
The unambiguity checking problem is:
\begin{enumerate}[(i)]
  \item in \expspace for VASSes with binary encoding;
  \item in \nl for $d$-VASSes with unary encoding for any fixed $d \in \N$;
  \item in \pspace  for $d$-VASSes with binary encoding for any fixed $d \in \N$.
\end{enumerate}
\end{proposition}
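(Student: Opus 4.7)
The plan is to reduce unambiguity checking to the language emptiness problem for VASS by a standard product construction, and then invoke the known complexity bounds on emptiness (equivalently, coverability) in each setting. Given a $d$-VASS $V = (\Sigma, d, Q, q_0, \delta, F)$, I would build a $2d$-VASS $V'$ whose language is non-empty if and only if $V$ is ambiguous. Its state space is $Q \times Q \times \{0,1\}$, where the extra bit tracks whether the two simulated runs have already diverged; the first (resp.\ last) $d$ counters record the counter values of the first (resp.\ second) simulated run. For each pair of $\Sigma$-labelled transitions $t_1 = (p_1, a, \bar v_1, q_1)$, $t_2 = (p_2, a, \bar v_2, q_2)$ of $V$, include a synchronous transition of $V'$ from $(p_1, p_2, b)$ to $(q_1, q_2, b')$ reading $a$ with effect $(\bar v_1, \bar v_2)$, where $b' = 1$ iff $b = 1$ or $t_1 \neq t_2$. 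The initial configuration is $(q_0, q_0, 0)(\bar 0)$ and the accepting set is $F \times F \times \{1\}$. By construction, $\lang(V') \neq \emptyset$ iff there exist two distinct accepting runs of $V$ on some common word, i.e., iff $V$ is ambiguous.

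The construction is log-space computable and blows up the size of $V$ only polynomially: $V'$ has $O(|Q|^2)$ states, dimension $2d$, and the same norm as $V$. It thus suffices to decide emptiness of $V'$ within the stated bounds. For item (i), emptiness of general VASS is in \expspace by Rackoff's theorem (as recalled in the preliminaries), proving the first bound. For items (ii) and (iii), when $d$ is fixed, the length bound $A_{|V|, 2d, 2|Q|^2}$ of Proposition~\ref{prop:rackoff} applied to $V'$ is polynomial in $|Q|$ and $|V|$, so the shortest witness of non-emptiness has length polynomial in $|V|$ and counter values along it are bounded by $|V|$ times that length. In the unary encoding $|V|$ is polynomial, hence the whole run and all its configurations are polynomial, and emptiness can be decided in \nl by guessing and verifying an accepting run configuration by configuration. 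In the binary encoding $|V|$ is exponential and so is the length of the witness, but counter values remain singly exponential and thus storable in polynomial space; guessing and verifying the run fits in nondeterministic polynomial space, which equals \pspace by Savitch's theorem.

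The main subtlety I foresee concerns $\epsilon$-transitions: since $V$ may use them, two distinct accepting runs on the same word need not be of equal length, so the two components of the product must be able to fire $\epsilon$-transitions asynchronously. This is handled by adding, for each $\epsilon$-transition $(p, \epsilon, \bar v, q)$ of $V$ and each state $r \in Q$, two $\epsilon$-labelled transitions in $V'$, namely $(p, r, b) \trans{\epsilon} (q, r, b')$ with effect $(\bar v, \bar 0)$ (advancing only the first copy) and $(r, p, b) \trans{\epsilon} (r, q, b')$ with effect $(\bar 0, \bar v)$ (advancing only the second), where $b' = 1$ if $b = 1$ and, from the initial configuration, $b$ flips to $1$ as soon as the sequences of transitions fired by the two copies first disagree. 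Once this asynchronous product is in place, correctness of the reduction is routine and the three upper bounds follow immediately from the emptiness results quoted above.
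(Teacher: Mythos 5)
Your proof is correct and follows essentially the same route as the paper: the paper's Lemma~\ref{lem:short-ambiguity-witness} is established by exactly your self-product $2d$-VASS with a divergence bit plus the Rackoff bound of Proposition~\ref{prop:rackoff}, and the paper then concludes by enumerating pairs of runs of the resulting bounded length, which is interchangeable with your guess-and-verify emptiness test on the product. Your explicit treatment of asynchronous $\epsilon$-moves is, if anything, slightly more careful than the paper's, which leaves that point implicit.
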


\begin{proof}
By Lemma~\ref{lem:short-ambiguity-witness} if a $d$-VASS with norm $M$ and $n$ states is ambiguous then there exists two different runs of length at most $A_{M, 2d, 2n^2}$ accepting the same word.
Number $A_{M, 2d, 2n^2} = (4n^4 (M+1)^2)^{(8d)^{2d-1}}$ is doubly exponential wrt. the size of the VASS representation when $M$ is given in binary and $d$ is not fixed.
For fixed $d$ an $M$ given in binary $A_{M, 2d, 2n^2}$ is exponential wrt. the input and for fixed $d$ and $M$ given in unary it is polynomial wrt. the input.
Therefore the algorithm, which enumerates all the pairs of different runs of length up to $A_{M, 2d, 2n^2}$ and checks whether some pair accepts the same word
works in \expspace, \pspace and \nl, respectively, which finishes the proof.
\end{proof}

\subsection{Lower bounds}

\begin{proposition}[Theorem~\ref{thm:testing-unambiguity}\eqref{it:unambcheck:vass-expspace} lower bound]\label{prop:unamb-check-VASS-expspace-hard}
The unambiguity checking problem for VASS with unary encoding is \expspace-hard.
\end{proposition}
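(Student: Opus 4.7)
The plan is to reduce from the emptiness problem for unambiguous $\epsilon$-VASS, which is \expspace-hard by \cref{lem:epsVASS-emptiness}. Crucially, Lipton's lower-bound construction that underlies this lemma produces VASS whose transition effects have components of constant magnitude, so the hardness survives under unary encoding. This is in fact the same reduction basis tacitly used earlier for the unary case of \cref{thm:universality}\eqref{it:vass-expspace} via \cref{prop:univ-expspace-hard}.

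Given an unambiguous $\epsilon$-VASS $\A = (\Sigma,d,Q,q_0,\delta,F)$, I construct in logarithmic space a VASS $\B$ that will be unambiguous if, and only if, $\lang(\A) = \emptyset$. I obtain $\B$ from $\A$ by adding a single fresh state $q_f$, a single $\epsilon$-transition $(q_0,\epsilon,\bar 0, q_f)$, and declaring $q_f$ final; formally, $\B = (\Sigma, d, Q \cup \set{q_f}, q_0, \delta \cup \set{(q_0,\epsilon,\bar 0, q_f)}, F \cup \set{q_f})$. Since the new transition has zero effect, the norm of $\B$ equals that of $\A$, so a unary-encoded instance of the source problem maps in polynomial time to a unary-encoded instance of the target problem.

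For correctness, note that $\A$ reads only $\epsilon$-labels, so $\lang(\A) \in \set{\emptyset, \set{\epsilon}}$, and likewise every run of $\B$ reads $\epsilon$. The new transition gives $\B$ a dedicated accepting run $q_0(\bar 0) \trans{\epsilon} q_f(\bar 0)$ on the empty word. If $\lang(\A) = \emptyset$, then this is the only accepting run of $\B$, so $\B$ is unambiguous. If $\lang(\A) = \set{\epsilon}$, then by unambiguity of $\A$ there is exactly one accepting run of $\A$ on $\epsilon$; together with the new run through $q_f$, $\B$ has two distinct accepting runs on $\epsilon$ and is therefore ambiguous.

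I expect the only non-routine point to be justifying that \cref{lem:epsVASS-emptiness} already yields \expspace-hardness under unary encoding; this rests on the folklore fact that Lipton's construction employs only constant-magnitude numbers in transitions, and is consistent with how the lemma is applied elsewhere in the paper. Once this is granted, the reduction above is immediate, so the main obstacle is essentially expository rather than technical.
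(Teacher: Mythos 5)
Your proposal is correct and follows essentially the same route as the paper: a reduction from emptiness of unambiguous $\epsilon$-VASS (Lemma~\ref{lem:epsVASS-emptiness}) by adding one fresh final state that accepts the empty word, so that the new VASS is unambiguous iff the original language is empty. Your version merely spells out the added transition and the unary-encoding point (Lipton's construction uses constant-magnitude effects) more explicitly than the paper does.
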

\begin{proof}
We reduce from the problem of whether an unambiguous $\eps$-VASS has an empty language,
which is \expspace-complete as mentioned in Lemma~\ref{lem:epsVASS-emptiness} (it is a consequence of Lipton's construction~\cite{lipton1976}).
To an unambiguous $\eps$-VASS we add one state accepting the empty word $\eps$.
Then the constructed VASS is unambiguous iff the original one has empty language, which finishes the \expspace-hardness proof.
\end{proof}

\begin{proposition}[Theorem~\ref{thm:testing-unambiguity}\eqref{it:unambcheck:d-vass-logspace} lower bound]\label{prop:unamb-check-dVASS-nl-hard}
The unambiguity checking problem for $d$-VASS with unary encoding is \nlogspace-hard.
\end{proposition}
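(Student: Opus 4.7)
The plan is to reduce from directed graph reachability, which is \nlogspace-complete, to the \emph{ambiguity} problem (the complement of unambiguity checking). Since \nlogspace is closed under complement by Immerman--Szelepcs\'enyi, this will yield the desired \nlogspace lower bound.

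Given a reachability instance $(G,s,t)$ with $G=(V,E)$ a directed graph and $s,t\in V$, I would construct in logarithmic space a $d$-VASS $\B_G$ over the one-letter alphabet $\set{a}$ as follows. Its state set is $V\cup\set{q_0,q_f}$ with $q_0$ initial and $\set{t,q_f}$ final, and its transitions are: two $\eps$-moves $q_0\trans{\eps}s$ and $q_0\trans{\eps}q_f$; an $a$-self-loop on $q_f$; and an $a$-labelled transition $u\trans{a}v$ for every edge $(u,v)\in E$. All transition vectors are $\bar 0$, so the counters play no role and the encoding is trivially unary; for $d>1$ I would simply pad with extra $0$-effect coordinates.

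The correctness argument I have in mind is as follows. For every $n\geq 1$, the word $a^n$ has the accepting run $q_0\trans{\eps}q_f\trans{a}\dotsb\trans{a}q_f$; a second, distinct accepting run on $a^n$ exists if and only if there is a walk of length $n$ from $s$ to $t$ in $G$. For the empty word, $q_0\trans{\eps}q_f$ is accepting, and the run $q_0\trans{\eps}s$ is accepting iff $s=t$. Hence $\B_G$ is ambiguous iff $t$ is reachable from $s$ in $G$, as required.

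I do not anticipate any genuine obstacle: the reduction is just a relabelling of $G$, obviously logspace-computable, and the only careful check is that no spurious accepting runs arise beyond the two sources just described, which is immediate from the shape of the transition graph.
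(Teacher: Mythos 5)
Your proof is correct and takes essentially the same route as the paper, which simply observes that the statement ``is already true for finite automata'': your construction has all-zero effects, so it is exactly the standard folklore reduction from graph reachability witnessing \nlogspace-hardness of (un)ambiguity checking for finite automata, padded to dimension $d$. The only point worth making explicit is the appeal to $\nlogspace = \textsc{co}\nlogspace$ to pass from ambiguity to unambiguity, which you do.
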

\begin{proof}
This is already true for finite automata.
\end{proof}

\begin{proposition}[Theorem~\ref{thm:testing-unambiguity}\eqref{it:unambcheck:d-vass-bin-pspace} lower bound]\label{prop:unambcheck:pspace-hard}
The unambiguity checking problem for 2-VASS is \pspace-hard.
\end{proposition}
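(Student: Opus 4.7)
The plan is to reduce from the PSPACE-hard $N$-bounded 1-VASS reachability problem of~\cite{FearnleyJ15} already used in Proposition~\ref{prop:pspace-hard-univ}: given $\A=(\Sigma,1,Q_\A,q,\delta_\A,F)$, a bound $N$ in binary, and a target $p(m)$, decide whether there is an $N$-bounded run from $q(0)$ to $p(m)$. I will construct in polynomial time a 2-VASS $\B$ such that $\B$ is \emph{ambiguous} iff the reachability instance is positive; closure of PSPACE under complement then yields PSPACE-hardness of unambiguity checking.

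The construction reuses the deterministic-simulation skeleton of Proposition~\ref{prop:pspace-hard-univ}. The alphabet of $\B$ is $\delta_\A \dcup \{\star\}$; there is a fresh initial state $q_0$ and a fresh sole accepting state $q_f$; a single transition $(q_0,\star,(0,N),q)$ initialises the counters. Each $t=(r,a,h,r')\in\delta_\A$ is used as a distinct letter labelling a single simulation transition $(r,t,(h,-h),r')$, so the simulation is deterministic and keeps the invariant that the two coordinates sum to $N$ on every reachable simulation configuration.

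The key twist relative to the universality reduction lies in the termination phase. I add two kinds of $\star$-transitions into $q_f$: a \emph{generic} one $(r,\star,(0,0),q_f)$ from every $r\in Q_\A$, which is always applicable; and a \emph{specific} one $(p,\star,(-m,-(N-m)),q_f)$, whose applicability at a reachable configuration $r(i,N-i)$ forces $r=p$, $i\geq m$, and $N-i\geq N-m$, hence by the invariant exactly $r=p$ and $i=m$. Every accepted word therefore has the shape $\star\rho\star$ with $\rho\in\delta_\A^*$; the generic transition always yields one accepting run, and a second accepting run via the specific transition exists iff $\rho$ encodes an $N$-bounded run in $\A$ from $q(0)$ to $p(m)$.

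The main obstacle I anticipate is ruling out any \emph{other} sources of ambiguity. By construction $q_0$ has a unique outgoing transition, $q_f$ has no outgoing transitions, and each $\delta_\A$-letter determines a unique simulation transition; so the only branching point in $\B$ is the choice of final $\star$-transition, which encodes exactly the reachability question. No further technical machinery beyond that already developed for Proposition~\ref{prop:pspace-hard-univ} is expected.
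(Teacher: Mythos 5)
Your reduction is correct, and it takes a somewhat different route from the paper's. The paper obtains this proposition as a direct corollary of the universality gadget of Proposition~\ref{prop:pspace-hard-univ}: it keeps that automaton as is, makes $\bot$ the sole accepting state, and adds an $\epsilon$-self-loop $(\bot,\epsilon,(0,0),\bot)$, so that any word reaching $\bot$ acquires infinitely many accepting runs; unambiguity then holds iff $\bot$ is unreachable, i.e.\ iff the language is empty. You instead strip the construction down to the bare deterministic simulation (dropping all the ``escape'' transitions needed only for universality) and engineer ambiguity through two competing $\star$-transitions out of $p$ into the accepting sink: a generic one with effect $(0,0)$ that is always enabled, and a specific one with effect $(-m,-(N-m))$ that, under the sum-to-$N$ invariant, is enabled exactly at $p(m,N-m)$. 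Both mechanisms are sound; yours has the small advantages of being self-contained, $\epsilon$-free, and of exhibiting exactly two accepting runs rather than infinitely many, while the paper's is shorter to state given that the universality construction is already on the table. One cosmetic corner case you should dispatch: if $N=m=0$ the two $\star$-transitions out of $p$ coincide as elements of $\delta$ and no ambiguity arises; this is harmless (route the specific transition to a second accepting state, or note that the $N=0$ instances are decidable in \nl and may be excluded), but it should be said.
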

\begin{proof}
	This is a corollary of the construction in the proof of Proposition~\ref{prop:pspace-hard-univ}. One can adapt the automaton by now having $\bot$ as a sole accepting state, and all other states as non-accepting, and adding a transition $(\bot, \epsilon, (0,0), \bot)$, in such a way that $\B$ is unambiguous if, and only if, there is no run that reaches $\bot$.
\end{proof}

\begin{proposition}[Theorem~\ref{thm:testing-unambiguity}\eqref{it:unambcheck:ocn-bin-conp}]\label{prop:conp-hard-unamb-checking}
	The unambiguity checking problem for $1$-VASS is \conp-hard.
\end{proposition}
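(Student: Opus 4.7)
The plan is to adapt the \conp-hardness reduction of Proposition~\ref{prop:coNP-hardness}, again starting from the \np-hard \partition problem, but rerouting the ``witness'' from a missing accepting run to a duplicated accepting run, so that ambiguity arises precisely from balanced subsets. Given an instance $S = \{n_1,\ldots,n_k\}$ with $\sum_i n_i = N$ (we may assume $N$ even, else no perfect partition exists and we output a trivially unambiguous $1$-VASS), we will construct a $1$-VASS $V_S$ over the alphabet $\{0,1\}$ whose only source of nondeterminism is an initial $\eps$-split into two mirrored branches.

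Concretely, from the initial state $q_0$ there will be two $\eps$-transitions, one to a state $p_0$ with effect $0$ and one to a state $s_0$ with effect $+N$. Branch~$1$, on states $p_0,\ldots,p_k$, reads the $i$-th input letter via $p_{i-1}\trans{0} p_i$ with effect $0$ and $p_{i-1}\trans{1} p_i$ with effect $+n_i$, so that after reading a word $w$ of length $k$ the counter holds $S_w := \sum_{i\,:\,w[i]=1} n_i$. Branch~$2$, on states $s_0,\ldots,s_k$, mirrors this with effect $0$ on letter $0$ and effect $-n_i$ on letter $1$; since every partial sum of selected $n_j$ is bounded by $N$, the counter stays nonnegative and ends at $N - S_w$ in $s_k$. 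Each branch then ends with an $\eps$-transition of effect $-N/2$ into a single accepting state $f$ which has no outgoing transitions.

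Correctness will follow by inspection: the transition $p_k\trans{\eps} f$ is firable iff $S_w \geq N/2$, and $s_k\trans{\eps} f$ is firable iff $N - S_w \geq N/2$, i.e., iff $S_w \leq N/2$. Hence a word $w \in \{0,1\}^k$ admits two accepting runs exactly when $S_w = N/2$, exactly one accepting run when $S_w \neq N/2$, and every word of length different from $k$ admits none (neither branch can reach $f$). Consequently $V_S$ is ambiguous iff some $w \in \{0,1\}^k$ satisfies $S_w = N/2$, equivalently iff $S$ admits a perfect partition.

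The only delicate point to double-check will be that no further source of ambiguity sneaks in, but since within each branch every state has a unique outgoing transition per input letter and $f$ has no outgoing transitions, the only nondeterministic choice is the initial $\eps$-split, as designed. Combined with \np-hardness of \partition~\cite{DBLP:conf/coco/Karp72}, this yields \conp-hardness of the unambiguity checking problem for $1$-VASS.
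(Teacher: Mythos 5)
Your proof is correct and takes essentially the same approach as the paper: the paper reuses the mirrored two-branch gadget $V_S$ from Proposition~\ref{prop:coNP-hardness}, merely lowering the final debit from $N+1$ to $N$ so that \emph{both} (rather than neither) of the two symmetric branches can fire the last transition exactly when $S_w = N/2$, which is precisely the mechanism of your construction. The only cosmetic differences are that the paper keeps doubled counter values (avoiding your parity assumption on $N$) and retains the length-filtering component inherited from the universality reduction, neither of which affects the argument.
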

\begin{proof}
A construction very similar to the one used to show \conp-hardness of universality  (Proposition~\ref{prop:coNP-hardness}) can be used to show that unambiguity checking for 1-VASS is \conp-hard.
If instead of transitions $r_k \trans{\eps; -(N+1)} p'$ and $s_k \trans{\eps; -(N+1)} p'$ we 
have transitions $r_k \trans{\eps; -N} p'$ and $s_k \trans{\eps; -N} p'$, then $V_S$ is ambiguous if and only if
there is a perfect partition for $S$. This shows that ambiguity checking is \np-hard and unambiguity checking
is \conp-hard.
\end{proof}

\section{Discussion}

We leave open the question about the exact complexity of universality problem for unambiguous 1-VASS with transitions represented in binary, which we showed to be \pspace-easy and \conp-hard. We conjecture that it is \conp-complete.
Another question that we leave open is the complexity of the universality problem for VASS without $\eps$-transitions; our \mbox{\expspace}-hardness of Proposition~\ref{prop:univ-expspace-hard} crucially uses $\eps$-transitions, and it is not clear whether it can be adapted to avoid them.
We conjecture that the universality problem for unambiguous VASS without $\eps$-transitions is still \expspace-hard.
An open question related to the gap of Theorem~\ref{thm:universality}\eqref{it:d-vass-unary-nctwo} is the one about the precise complexity of the universality problem for unambiguous finite automata, which is \nl-hard and only known to be in \nctwo~\cite{DBLP:journals/ipl/Tzeng96}.


While we have focused our study on the universality and unambiguity checking problems for unambiguous VASS, we point out that there are many intriguing unanswered problems on unambiguous systems.
%
%
In particular, closely related to the universality problem are: co-finiteness, equivalence and inclusion problems.
The universality problem is often strongly connected with the equivalence and inclusion problems. As observed in Section~\ref{sec:results}, the techniques allow for answering the equivalence problem with a regular language. However, equivalence between two unambiguous VASS seems a more difficult question. In particular, observe that trying to reduce $L(\A) \subseteq L(\B)$ to $L((\A) \cap L(\B)) \cup \overline{L(\B)} = \Sigma^*$ would fail in this case, since VASS and unambiguous VASS are not closed under complement ---in fact, the only VASSes whose complement is a VASS are those denoting regular languages \cite{CzerwinskiLMMKS18}.

It is natural to ask about the decidability and complexity of these problems for most fundamental models of computation:
finite automata, one counter nets, VASS or even pushdown automata (PDA) under the assumption of unambiguity. We give some examples. While equivalence of VASS languages is undecidable, is it decidable for unambiguous VASS? Language equivalence is undecidable for PDA and decidable for deterministic PDA (by the celebrated result of Sénizergues~\cite{DBLP:journals/tcs/Senizergues01}), but might it still be decidable for unambiguous PDA? And what about universality?

\paragraph{Acknowledgements} We thank Lorenzo Clemente for leading us to the \nctwo membership for UFA universality problem.



\bibliography{citat}

\end{document}